\documentclass{article}

\usepackage[margin=1in]{geometry}
\usepackage{microtype}

\bibliographystyle{plainurl}
\usepackage[hidelinks]{hyperref}

\title{Fractal Intersections and Products via Algorithmic Dimension}

\author{Neil Lutz}

\date{}

\usepackage{amsmath}
\usepackage{amssymb}
\usepackage{mdframed}
\usepackage{amsthm}
\usepackage{mathtools}
\usepackage{placeins}
\usepackage{tikz}

\usetikzlibrary{lindenmayersystems}

\newtheorem{thm}{Theorem}

\newtheorem{lem}[thm]{Lemma}

\newtheorem{cor}[thm]{Corollary}

\theoremstyle{remark}

\theoremstyle{definition}
\newtheorem{definition}{Definition}

\DeclareMathOperator{\Dim}{Dim}
\DeclareMathOperator{\diam}{diam}

\newcommand{\R}{\mathbb{R}}

\newcommand{\N}{\mathbb{N}}
\newcommand{\Q}{\mathbb{Q}}

\newcommand{\ve}{\varepsilon}

\newcommand{\dimH}{\mathrm{dim}_\mathrm{H}}
\newcommand{\dimP}{\mathrm{dim}_\mathrm{P}}

\pgfdeclarelindenmayersystem{Koch}{
	\symbol{F}{\pgflsystemdrawforward}
	\rule{F -> F-F++F-F}
}
\def\ordervar{5}

\begin{document}
	\maketitle

\begin{abstract} 
	Algorithmic fractal dimensions quantify the algorithmic information density of individual points and may be defined in terms of Kolmogorov complexity. This work uses these dimensions to bound the classical Hausdorff and packing dimensions of intersections and Cartesian products of fractals in Euclidean spaces. This approach shows that two prominent, fundamental results about the dimension of Borel or analytic sets also hold for arbitrary sets.
\end{abstract}

	\section{Introduction}
		Classical fractal dimensions, among which \emph{Hausdorff dimension}~\cite{Haus19} and \emph{packing dimension}~\cite{Tric82} are the most important, refine notions of measure to quantitatively classify sets of measure 0. In 2000, J. H. Lutz~\cite{Lutz03b} showed that Hausdorff dimension can be simply characterized using betting strategies called \emph{gales}, and that this characterization can be effectivized in order to quantitatively classify non-random infinite data objects. This approach yielded \emph{algorithmic fractal dimensions}, which have been applied to multiple areas of computer science and have proved especially useful in algorithmic information theory~\cite{Mayo08}.
		
		The connection between algorithmic and classical dimensions has more recently been exploited in the other direction, i.e., to apply algorithmic information theoretic methods and intuition to classical fractal geometry (e.g.,~\cite{Reim08,BeReSl18}). A \emph{point-to-set principle} of J. H. Lutz and N. Lutz~\cite{LutLut18}, stated here as Theorem~\ref{thm:p2s}, characterizes the classical Hausdorff dimension of any set in $\R^n$ in terms of the algorithmic dimensions of its individual points.
		
		In the same work, J. H. Lutz and N. Lutz showed that the point-to-set principle gives rise to a new, pointwise technique for dimensional lower bounds, and, as a proof of concept, used this technique to give an algorithmic information theoretic proof of Davies's 1971 theorem~\cite{Davi71} stating that every Kakeya set in $\R^2$ has Hausdorff dimension 2. This bounding technique has since been used by N. Lutz and Stull~\cite{LutStu20} to make new progress on a problem in classical fractal geometry by deriving an improved lower bound on the Hausdorff dimension of generalized Furstenberg sets, as defined by Molter and Rela~\cite{MolRel12}.
		
		The same algorithmic dimensional technique is applied here to intersections and products of fractals, yielding two main results. First, we extend the following intersection formula, previously shown to hold when $E$ and $F$ are Borel sets~\cite{Falc14}, to arbitrary sets $E$ and $F$.\footnote{This result is closely related to the Marstrand's slicing theorem, as stated in the excellent recent book by Bishop and Peres~\cite{BisPer16}. The proof given there assumes that a set is Borel, but this assumption was inadvertently omitted from the theorem statement~\cite{Bishop17}.} This formula is illustrated in Figure~\ref{fig:int}.
		\begin{thm}\label{thm:hausint}
			For all $E,F\subseteq\R^n$, and for (Lebesgue) almost all $z\in\R^n$,
			\begin{equation*}
			\dimH(E\cap(F+z))\leq\max\{0,\dimH(E\times F)-n\}\,,
			\end{equation*}
			where $F+z=\{x+z:x\in F\}$. 
		\end{thm}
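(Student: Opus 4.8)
The plan is to invoke the point-to-set principle (Theorem~\ref{thm:p2s}a) to reduce the statement to a pointwise assertion about algorithmic dimensions, where a symmetry-of-information (chain rule) computation does the real work. Write $s=\dim_H(E\times F)$ and use Theorem~\ref{thm:p2s}a to fix an oracle $A\subseteq\N$ that is optimal for $E\times F$, so that $\dim^A(x,y)\le s$ for every $(x,y)\in E\times F$. Next, identify the exceptional set of translations: for each rational $q<n$, Theorem~\ref{thm:p2s}a gives $\dim_H\{z\in\R^n:\dim^A(z)\le q\}\le q$, so the set $\{z:\dim^A(z)<n\}$ is a countable union of sets of Hausdorff dimension less than $n$, and is therefore Lebesgue-null. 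Crucially this null set depends only on $A$, and it is the only set of translations we will need to exclude.

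Now fix $z$ with $\dim^A(z)=n$ and any $x\in E\cap(F+z)$, and set $y=x-z\in F$, so $(x,y)\in E\times F$ and $z=x-y$. Since $z$ is the image of $(x,y)$ under the computable Lipschitz map $(u,v)\mapsto u-v$, we get $n=\dim^A(z)\le\dim^A(x,y)\le s$; in particular $E\cap(F+z)$ is nonempty only when $s\ge n$, so that $\max\{0,s-n\}=s-n$. The key estimate is
\begin{equation*}
K^{A,z}_r(x)\le K^A_r(x,y)-K^A_r(z)+O(\log r)\qquad(r\in\N)\,,
\end{equation*}
which follows from the symmetry of information for Kolmogorov complexity in Euclidean space, using that $z$ is computable from $(x,y)$ and that, given $z$, describing $x$ is equivalent to describing $(x,y)$; here $K^{A,z}_r$ denotes complexity with oracle $z$, which is exactly what the final step will need. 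Because $\dim^A(x,y)=\liminf_r K^A_r(x,y)/r\le s$, for every $\ve>0$ there are infinitely many $r$ with $K^A_r(x,y)<(s+\ve)r$, and because $\dim^A(z)=n$, every sufficiently large $r$ has $K^A_r(z)>(n-\ve)r$. Substituting into the estimate gives $K^{A,z}_r(x)<(s-n+2\ve)r+O(\log r)$ for infinitely many $r$, so $\dim^{A,z}(x)\le s-n+2\ve$; letting $\ve\to0$ yields $\dim^{A,z}(x)\le s-n=\max\{0,s-n\}$.

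Finally, apply the (easy, upper-bound) half of Theorem~\ref{thm:p2s}a to $E\cap(F+z)\subseteq\R^n$ with the oracle $(A,z)$:
\begin{equation*}
\dim_H\big(E\cap(F+z)\big)\le\sup_{x\in E\cap(F+z)}\dim^{A,z}(x)\le\max\{0,\dim_H(E\times F)-n\}\,,
\end{equation*}
with the convention $\dim_H\varnothing=0$ covering the case $E\cap(F+z)=\varnothing$. Since this holds for every $z$ outside the fixed null set, it holds for almost every $z$. I expect the main obstacle to be the symmetry-of-information estimate: getting the conditioning conventions right and checking that its error term really is $o(r)$ uniformly in $x$ (this is where one must verify that the single optimal oracle $A$ can be used for all points simultaneously). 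The remaining ingredients---that computable Lipschitz maps do not raise algorithmic dimension, and that a subset of $\R^n$ of Hausdorff dimension below $n$ is Lebesgue-null---are routine, and if Theorem~\ref{thm:p2s}a supplies only a near-optimal oracle rather than an optimal one, the resulting additive slack is absorbed into $\ve$.
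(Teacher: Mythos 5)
Your proposal is correct and follows essentially the same route as the paper: fix an oracle $A$ that is optimal for $E\times F$ via Theorem~\ref{thm:p2s}a, exclude the null set of $z$ with $\dim^A(z)<n$, and for the remaining $z$ bound $\dim^{A,z}(x)$ for each $x\in E\cap(F+z)$ by a chain-rule/symmetry-of-information argument applied to $(x,x-z)\in E\times F$, then transfer back via the easy direction of the point-to-set principle. The only cosmetic differences are that you unpack the dimension-level chain rule (Theorem~\ref{thm:chainrule}) and Lemma~\ref{lem:relcond} into an explicit Kolmogorov-complexity estimate, and you justify ``$\dim^A(z)=n$ almost everywhere'' via the point-to-set principle rather than Martin-L\"of randomness; both substitutions are sound.
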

		\begin{figure}[ht]
			\centering
			\begin{tikzpicture}[scale=1.25]
			\draw [gray,very thin,shift={(7.5,0)}] l-system
			[l-system={Koch, axiom=F++F++F,step=3cm/3^\ordervar, order=\ordervar, angle=60}];
			\draw [black,very thin,shift={(5.5,0)}] (0,0) l-system
			[l-system={Koch, axiom=F++F++F,step=3cm/3^\ordervar, order=\ordervar, angle=60}];
			\draw [gray,very thin,shift={(1.43,0.13)}] (0,0) l-system
			[l-system={Koch, axiom=F++F++F,step=3cm/3^\ordervar, order=\ordervar, angle=60}];
			\draw [black,very thin,shift={(0,0)}] (0,0) l-system
			[l-system={Koch, axiom=F++F++F,step=3cm/3^\ordervar, order=\ordervar, angle=60}];
			\end{tikzpicture}
			\caption{Let $E$ and $F$ both be Koch snowflakes, which have Hausdorff dimension $\log_34\approx1.26$. \emph{Left:} Theorem~\ref{thm:hausint} states that, for almost all translation parameters $z\in\R^2$, the Hausdorff dimension of the intersection $E\cap(F+z)$ is at most $2\log_34-2\approx0.52$. \emph{Right:} For a measure zero set of translations, the intersection may have Hausdorff dimension as large as that of the original sets. Note that Koch curves are Borel sets, so the new generality introduced by Theorem~\ref{thm:hausint} is not required for this example.}\label{fig:int}
		\end{figure}

		Second, we prove that the following characterization of packing dimension, previously known to hold when $E$ is an analytic set, holds for arbitrary $E$.

		\begin{thm}\label{thm:packprod}
			For every set $E\subseteq\R^n$,
			\[\dimP(E)=\sup_{F\subseteq\R^n}\big(\dimH(E\times F)-\dimH(F)\big)\,.\]
		\end{thm}
	
		Our approach also yields a simplified proof of Marstrand's~\cite{Mars54} product formula for general sets:
		\begin{thm}[\cite{Mars54}]\label{thm:mars}
			For all $E\subseteq\R^m$ and $F\subseteq\R^n$,
			\[\dimH(E)+\dimH(F)\leq\dimH(E\times F)\,.\]
		\end{thm}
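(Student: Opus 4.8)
The plan is to deduce the inequality from the point-to-set principle (Theorem~\ref{thm:p2s}a) together with a \emph{chain rule} for algorithmic dimension. The point-to-set principle gives, for every $k$ and every $S\subseteq\R^k$, $\dim_H(S)=\inf_{A\subseteq\N}\sup_{x\in S}\dim^A(x)$; in particular $\dim_H(S)\le\sup_{x\in S}\dim^A(x)$ for each individual oracle $A$. Hence it suffices to prove that for \emph{every} oracle $A\subseteq\N$,
\[
\sup_{(x,y)\in E\times F}\dim^A(x,y)\;\ge\;\dim_H(E)+\dim_H(F)\,,
\]
because applying the point-to-set principle to $E\times F$ and then taking the infimum over $A$ on the left yields $\dim_H(E)+\dim_H(F)\le\dim_H(E\times F)$. (We may assume $E$ and $F$ are nonempty.)

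Fix an arbitrary oracle $A$ and $\varepsilon>0$; I will produce a point $(x,y)\in E\times F$ with $\dim^A(x,y)>\dim_H(E)+\dim_H(F)-\varepsilon$. Since $\dim_H(E)\le\sup_{x\in E}\dim^A(x)$, first choose $x\in E$ with $\dim^A(x)>\dim_H(E)-\varepsilon/2$. Now adjoin a standard encoding of this \emph{fixed} point $x$ to the oracle and apply the point-to-set principle to $F$ relative to the oracle $(A,x)$: since $(A,x)$ is just one competing oracle in the infimum, $\dim_H(F)\le\sup_{y\in F}\dim^{(A,x)}(y)$, so choose $y\in F$ with $\dim^{(A,x)}(y)>\dim_H(F)-\varepsilon/2$. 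Using $x$ as an oracle in the second step is essential: it guarantees that the dimension we extract from $y$ is genuinely new and not information already carried by $x$ (for $y=x$ one has $\dim^A(x,y)=\dim^A(x)$, not $2\dim^A(x)$).

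What remains is the chain rule
\[
\dim^A(x,y)\;\ge\;\dim^A(x)+\dim^{(A,x)}(y)\,,
\]
valid for all $x\in\R^m$, $y\in\R^n$, and $A\subseteq\N$; granting it, the two choices above give $\dim^A(x,y)>\dim_H(E)+\dim_H(F)-\varepsilon$, and letting $\varepsilon\to0$ completes the proof. To establish the chain rule I would argue at each precision $r$: the symmetry of information for Kolmogorov complexity gives $K^A_r(x,y)\ge K^A_r(x)+K^A_r(y\mid x\uhr r)-O(\log r)$, while oracle access to $x$ is at least as useful as being handed the prefix $x\uhr r$, so $K^A_r(y\mid x\uhr r)\ge K^{(A,x)}_r(y)-O(\log r)$; dividing by $r$, sending $r\to\infty$, and invoking superadditivity of $\liminf$ converts these finite-precision estimates into the stated dimension inequality. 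I expect the finite-precision symmetry of information — with its logarithmic error terms and the encoding of real vectors as oracles tracked carefully — to be the only genuinely delicate ingredient; the rest is a short assembly of the point-to-set principle with this lemma, and the same point-to-set-plus-chain-rule scheme, now combined with a pointwise estimate relating a point to its translate, is what will underlie Theorem~\ref{thm:hausint}.
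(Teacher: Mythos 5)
Your proposal is correct and takes essentially the same route as the paper: fix an oracle, pick $x\in E$ with $\dim^A(x)$ near $\dim_H(E)$, then $y\in F$ with $\dim^{A,x}(y)$ near $\dim_H(F)$, apply the inequality $\dim^A(x,y)\ge\dim^A(x)+\dim^{A,x}(y)$, and finish with the point-to-set principle for $E\times F$. The only cosmetic difference is that the paper gets that key inequality by citing its chain rule (Theorem~\ref{thm:chainrule}) together with Lemma~\ref{lem:relcond}, both relativized, rather than re-deriving it from symmetry of information as you sketch.
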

		There are previously known analogues to Theorems~\ref{thm:hausint} and~\ref{thm:mars} for packing dimension~\cite{Tric82,Falc94}, and we use symmetric arguments to derive these as well. These results are included here to showcase the versatility of this technique and its ability to capture the exact duality between Hausdorff and packing dimensions.
		
	\section{Classical Fractal Dimensions}\label{sec:classical}
	
		We begin by stating classical, measure-theoretic definitions of the two most well-studied notions of fractal dimension, Hausdorff dimension and packing dimension. These definitions are included here for completeness but are not used directly in the remainder of this work; we will instead apply equivalent characterizations in terms of algorithmic information, as described in Section~\ref{sec:dim}.

		\begin{definition}[\cite{Haus19}]
			For $E\subseteq\R^n$, let $\mathcal{U}_\delta(E)$ be the collection of all countable covers of $E$ by sets of positive diameter at most $\delta$.
			For all $s\geq 0$, let
			\[H_\delta^s(E)=\inf\left\{\sum_{i\in\N}\diam(U_i)^s\,:\,\{U_i\}_{i\in\N}\in\mathcal{U}_\delta(E)\right\}\,.\]
			The \emph{$s$-dimensional Hausdorff (outer) measure of $E$} is
			\[H^s(E)=\lim_{\delta\to 0^+}H_\delta^s(E)\,,\]
			and the \emph{Hausdorff dimension of $E$} is
			\[\dimH(E)=\inf\left\{s>0:H^s(E)=0\right\}\,.\]
		\end{definition}
	
		Three desirable properties have made $\dimH$ the most standard notion of fractal dimension since it was introduced by Hausdorff in 1919~\cite{Haus19}. First, it is defined on every set in $\R^n$. Second, it is \emph{monotone}: if $E\subseteq F$, then $\dimH(E)\leq\dimH(F)$. Third, it is \emph{countably stable}: if $E=\bigcup_{i\in\N}E_i$, then $\dimH(E)=\sup_{i\in\N}\dimH(E_i)$. These three properties also hold for packing dimension, which was defined much later, independently by Tricot~\cite{Tric82} and by Sullivan~\cite{Sull84}.

		\begin{definition}[\cite{Tric82,Sull84}]
			For all $x\in\R^n$ and $\rho>0$, let $B_\rho(x)$ denote the open ball of radius $\rho$ and center $x$. For all $E\subseteq\R^n$, let $\mathcal{V}_\delta(E)$ be the class of all countable collections of pairwise disjoint open balls with centers in $E$ and diameters at most $\delta$. That is, for every $i\in\N$, we have $V_i=B_{\rho_i}(x_i)$ for some $x_i\in E$ and $\rho_i\in[0,\delta/2]$, and for every $j\neq i$, $V_i\cap V_j=\emptyset$.
			For all $s\geq0$, define
			\[P_\delta^s(E)=\sup\left\{\sum_{i\in\N}\diam(V_i)^s\,:\,\{V_i\}_{i\in\N}\in\mathcal{V}_\delta(E)\right\}\,,\]
			and let
			\[P_0^s(E)=\lim_{\delta\to0^+}P^s_\delta(E)\,.\]
			The \emph{$s$-dimensional packing (outer) measure of $E$} is
			\[P^s(E)=\inf\bigg\{\sum_{i\in\N} P_0^s(E_i)\,:\,E\subseteq\bigcup_{i\in\N}E_i\bigg\}\,,\]
			and the \emph{packing dimension of $E$} is
			\[\dimP(E)=\inf\left\{s:P^s(E)=0\right\}\,.\]
		\end{definition}
		Notice that defining packing dimension in this way requires an extra step of optimization compared to Hausdorff dimension. More properties and details about classical fractal dimensions may be found in standard references such as~\cite{Matt95,Falc14,SteSha05}.
		
	\section{Algorithmic Fractal Dimensions}\label{sec:dim}
		This section defines algorithmic fractal dimensions in terms of algorithmic information, i.e., Kolmogorov complexity. We also discuss some properties of these dimensions, including their relationships to classical Hausdorff and packing dimensions.

		\subsection{Kolmogorov Complexity}
				
		Kolmogorov complexity quantifies the \emph{incompressibility} of finite data objects.

		\begin{definition}[see~\cite{LiVit08}]
			The \emph{(prefix-free) conditional Kolmogorov complexity} of a string $\sigma\in\{0,1\}^*$ given another string $\tau\in\{0,1\}^*$ is the length of the shortest binary program that outputs $\sigma$ when given $\tau$ as an input. Formally, it is
			\[K(\sigma\mid \tau)=\min_{\pi\in\{0,1\}^*}\{|\pi|\,:\,U(\pi,\tau)=\sigma\}\,,\]
			where $U$ is a fixed universal Turing machine that is prefix-free in its first input and $|\pi|$ denotes the length of the string $\pi$.
		
			The \emph{Kolmogorov complexity of $\sigma\in\{0,1\}^*$} is conditional Kolmogorov complexity of $\sigma\in\{0,1\}^*$ given $\lambda$, the empty string:
			\[K(\sigma)=K(\sigma\mid\lambda)\,.\]
			This quantity is also called the \emph{algorithmic information content} of $\sigma$.
		\end{definition}

		The \emph{symmetry of algorithmic information} (Levin and Kolmogorov, see~\cite{ZvoLev70}) is an essential property of conditional Kolomogorov complexity stating that for all $\sigma,\tau\in\{0,1\}^*$,
		\begin{equation}\label{eq:symmetry}
		K(\sigma\tau) = K(\sigma\mid\tau) + K(\tau) + O(\log |\sigma\tau|)\,.
		\end{equation}

		The definitions in this subsection are readily extended to discrete domains other than binary strings. For the purposes of this work, the complexity of rational points is most relevant. Hence, fix some standard binary encoding for $n$-tuples of rationals.

		\subsection{Algorithmic Dimensions of a Point}
		Using approximation by rationals, Kolmogorov complexity may be further extended to Euclidean spaces~\cite{LutMay08}. For every $E\subseteq\R^n$, define
		\[K(E)=\min\{K(p):p\in E\cap\Q^n\}\,,\]
		where the minimum is understood to be infinite if $E\cap\Q^n$ is empty. This is the length of the shortest program that outputs some rational point in $E$.
		The \emph{Kolmogorov complexity} of $x\in\R^n$ at \emph{precision} $r\in\N$ is given by
		\[K_r(x)=K(B_{2^{-r}}(x))\,,\]
		the length of the shortest program that outputs any \emph{precision-$r$} rational approximation of $x$. $K_r(x)$ may also be described as the \emph{algorithmic information content} of $x$ at precision $r$, and similarly, $K_r(x)/r$ is the \emph{algorithmic information density} of $x$ at precision $r$. This ratio does not necessarily converge as $r\to\infty$, but it does have limiting bounds in $[0,n]$. These limits are used to define algorithmic dimensions.
		\begin{definition}[\cite{Lutz03b,Mayo02,AHLM07,LutMay08}]\label{def:edims}
			Let $x\in\R^n$.
			\begin{enumerate}
			\item The \emph{(algorithmic) dimension} of $x$ is
			\[\dim({x})=\liminf_{r\to\infty}\frac{K_r({x})}{r}\,.\]
			\item The \emph{strong (algorithmic) dimension} of $x$ is \[\Dim({x})=\limsup_{r\to\infty}\frac{K_r({x})}{r}\,.\]
			\end{enumerate}
		\end{definition}
	
		These dimensions were originally defined by J. H. Lutz~\cite{Lutz03b} and Athreya, Hitchcock, J. H. Lutz, and Mayordomo~\cite{AHLM07}, respectively. The original definitions were in Cantor space and used \emph{gales}, which are betting strategies that generalize martingales, emphasizing the unpredictability of a sequence instead of its incompressibility. The Kolmogorov complexity characterizations and translation to Euclidean spaces are due to Mayordomo~\cite{Mayo02} and J. H. Lutz and Mayordomo~\cite{LutMay08}. Relationships between Hausdorff dimension and Kolmogorov complexity were also studied earlier by Ryabko~\cite{Ryab84,Ryabko86,Ryabko93,Ryabko94}, Staiger~\cite{Staiger89,Staiger98}, and Cai and Hartmanis~\cite{CaiHar94}; see Section~6 of~\cite{Lutz03b} for a detailed discussion of this history.

		We will use the fact that these dimensions are preserved by sufficiently well-behaved functions, namely bi-Lipschitz computable functions.
		\begin{lem} [\cite{Reim04,CasLut15}]\label{lem:pres}
			If $f:\R^m\to\R^n$ is computable and bi-Lipschitz, then $\dim(x)=\dim(f(x))$ and $\Dim(x)=\Dim(f(x))$ for all $x\in\R^m$.
		\end{lem}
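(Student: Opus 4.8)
The plan is to compare $K_r(x)$ with $K_r(f(x))$ directly, losing only an additive $O(\log r)$ term and a bounded shift in the precision parameter, and then to pass to limits. Fix constants $0<c_1\le c_2$ witnessing the bi-Lipschitz hypothesis, so that $c_1|u-v|\le|f(u)-f(v)|\le c_2|u-v|$ for all $u,v\in\R^m$. Two routine facts will be used throughout: the map $r\mapsto K_r(x)$ is nondecreasing, since a point of $B_{2^{-r}}(x)$ also lies in $B_{2^{-r'}}(x)$ whenever $r'\le r$; and neither adding $O(\log r)$ bits nor replacing $r$ by $r\pm O(1)$ changes $\liminf_r K_r(\cdot)/r$ or $\limsup_r K_r(\cdot)/r$, because these ratios lie in the bounded interval $[0,n]$.

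For the inequalities $\dim(f(x))\le\dim(x)$ and $\Dim(f(x))\le\Dim(x)$ I would use the forward map. Let $\pi$ be a shortest program with $U(\pi)=p$ for some $p\in B_{2^{-r}}(x)\cap\Q^m$, so $|\pi|=K_r(x)$. Consider the program that reads a self-delimiting code for $r$ followed by $\pi$, runs $\pi$ to recover $p$, uses the computability of $f$ to compute a rational $q$ with $|q-f(p)|<2^{-r}$, and outputs $q$. Then $|q-f(x)|\le|q-f(p)|+c_2|p-x|<(1+c_2)2^{-r}$, so $q$ witnesses $K_{r-a}(f(x))\le K_r(x)+2\log r+O(1)$ with $a=\lceil\log_2(1+c_2)\rceil$. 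Dividing by $r$ and taking $\liminf$ (resp. $\limsup$) over $r$ gives the two inequalities.

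For the reverse inequalities $\dim(x)\le\dim(f(x))$ and $\Dim(x)\le\Dim(f(x))$ I cannot simply invert $f$, since $f$ need not be onto, so I would instead search. Let $\sigma$ be a shortest program with $U(\sigma)=q$ for some $q\in B_{2^{-r}}(f(x))\cap\Q^n$. Consider the program that reads a self-delimiting code for $r$ followed by $\sigma$, computes $q$, then enumerates $\Q^m$, computing for each $p$ a rational $\tilde q_p$ with $|\tilde q_p-f(p)|<2^{-r}$, and halts at the first $p$ with $|\tilde q_p-q|\le 4\cdot2^{-r}$, outputting that $p$. Any rational within $c_2^{-1}2^{-r}$ of $x$ passes this test, so the search halts; and for the $p$ it returns, $|f(p)-f(x)|\le|f(p)-\tilde q_p|+|\tilde q_p-q|+|q-f(x)|<6\cdot2^{-r}$, hence $|p-x|\le c_1^{-1}|f(p)-f(x)|<6c_1^{-1}2^{-r}\le 2^{-(r-b)}$ with $b=\lceil\log_2(6/c_1)\rceil$. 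Thus $K_{r-b}(x)\le K_r(f(x))+2\log r+O(1)$, and passing again to $\liminf$/$\limsup$ completes the argument. (Alternatively, the lemma follows at once from the preservation of $\mdim$ and $\Mdim$ under computable bi-Lipschitz maps shown by Case and J. Lutz, together with Observation~\ref{obs:mdimdim}.)

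I expect the reverse direction to be the only real obstacle: one must verify that the enumeration above is a legitimate algorithm that halts, and that the rational it returns really is a precision-$(r-b)$ approximation of $x$ — which is exactly where the computability of $f$ and the lower bound in the bi-Lipschitz condition are both used. The forward direction and the limit bookkeeping are entirely routine.
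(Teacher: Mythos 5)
Your argument is correct, but it is not the route the paper takes: the paper does not prove Lemma~\ref{lem:pres} from the definition of $K_r$ at all, it simply invokes the Case--J.~Lutz theorem that $\mdim$ and $\Mdim$ are preserved by computable bi-Lipschitz maps and combines it with Observation~\ref{obs:mdimdim} (the option you mention only in your final parenthesis, and which is also why the lemma carries the attribution to Reimann and to Case and J.~Lutz rather than a proof). Your direct proof is more self-contained and in one respect more transparent: it works straight from the characterization $K_r(x)=K(B_{2^{-r}}(x))$, it isolates exactly where each hypothesis is used (computability of $f$ and the upper Lipschitz constant for $\dim(f(x))\le\dim(x)$; computability plus the lower Lipschitz constant, via the halting and correctness of your search, for the reverse), and it makes explicit that surjectivity of $f$ is never needed, since the exhaustive search over $\Q^m$ replaces any appeal to an inverse map. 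The details check out: the self-delimiting code for $r$ costs only $O(\log r)$ bits, which vanishes after dividing by $r$; the precision shifts by the constants $a=\lceil\log_2(1+c_2)\rceil$ and $b=\lceil\log_2(6/c_1)\rceil$ do not affect the $\liminf$ and $\limsup$; the search halts because every rational within $c_2^{-1}2^{-r}$ of $x$ passes your test; and whatever rational it returns satisfies $|p-x|<6c_1^{-1}2^{-r}$ by the lower Lipschitz bound. What the paper's route buys is brevity and uniformity with the rest of its toolkit (everything is reduced to cited dimension-level facts); what yours buys is an elementary, machine-level proof that does not depend on the mutual-dimension machinery of Case and J.~Lutz.
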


		It will sometimes be convenient to identify a Euclidean point $x\in\R^n$ with an infinite binary sequence that interleaves the binary expansions of $x$'s coordinates to the right of the binary point. That is, if $x=(x^1,x^2,\ldots,x^n)$, where each $x^i=x^i_{-k_i}x^i_{1-k_i}\ldots x^i_0\,.\,x^i_1x^i_2\ldots$, for some $k_i\in\N$, then we associate $x$ with the binary sequence
		\[\dot{x}=x^1_1x^2_1\ldots x^{n-1}_1x^n_1x^1_2x^2_2\ldots x^{n-1}_2x^n_2\ldots\,.\]
		Discarding the bits to the left of the binary point is convenient and will affect the complexity of prefixes of $\dot{x}$ by only a constant amount. Given an infinite binary sequence $y=y_1y_2y_3\ldots$ and $a,b\in\N$, we let $y[a:b]$ denote $y_{a+1}y_{a+2}\ldots y_{b-1}y_{b}$. If $a\geq b$, then $y[a:b]$ is the empty string $\lambda$. Thus, for any point $x\in[0,1]^n$ and precision parameter $r\in\N$, the string $\dot{x}[0:nr]$ specifies a dyadic point that is within distance $2^{-r}\sqrt{n}$ of $x$. Naturally,
		\begin{equation}\label{eq:convert}
			K(\dot{x}[0:nr])=K_r(x)+O(\log r)\,.
		\end{equation}
		(Throughout this work, the dimensions $m$ and $n$ of Euclidean spaces are treated as constants in asymptotic notation.) See~\cite{LutStu20} for a detailed analysis of the relationship between these two quantities.

		\subsection{Conditional and Relative Dimensions}
		By analogy with Definition~\ref{def:edims}, J. H. Lutz and N. Lutz~\cite{LutLut18} defined \emph{conditional dimensions}. For every $E\subseteq\R^m$ and $F\subseteq\R^n$, define
		\[K(E\mid F)=\max_{q\in F\cap\Q^n}\min_{p\in E\cap\Q^m} K(p\mid q)\,,\]
		where this quantity is understood to be infinite if $E\cap\Q^m$ is empty and $K(E)$ if $F\cap\Q^n$ is empty. This is the length of the shortest program that outputs some rational point in $E$ when given the ``least helpful'' rational point from $F$ as an input. The \emph{conditional Kolmogorov complexity} of $x\in\R^m$ given $y\in\R^n$ at precision $r\in \N$ is given by
		\[K_r(x\mid y)=K(B_{2^{-r}}(x)\mid B_{2^{-r}}(y))\,,\]
		the length of the shortest program that outputs any precision-$r$ rational approximation of $x$, given an adversarially selected precision-$r$ rational approximation of $y$. These quantities can also be put in terms of binary expansions with only logarithmic additive error~\cite{LutStu20}: For all $x\in\R^m$, $y\in\R^n$, and $r\in\N$,
		\begin{equation}\label{eq:condconvert}
			K(\dot{x}[0:mr]\mid\dot{y}[0:nr])=K_r(x\mid y)+O(\log r)\,.
		\end{equation}

		\begin{definition}[\cite{LutLut18}]\label{def:cdims}
			Let $x\in\R^m$ and $y\in\R^n$. The \emph{lower} and \emph{upper conditional dimensions} of $x$ given $y$ are
			\[\dim(x\mid y)=\liminf_{r\to\infty}\frac{K_r(x\mid y)}{r}\quad\text{and}\quad\Dim(x\mid y)=\limsup_{r\to\infty}\frac{K_r(x\mid y)}{r}\,,\]
			respectively.
		\end{definition}
	
		Conditional dimensions obey the following \emph{chain rule for dimensions}, which plays a key role in the proofs of our main results.
				
		\begin{thm}[\cite{LutLut18}]\label{thm:chainrule}
			For all $x\in\R^m$ and $y\in\R^n$,
			\begin{align*}
			\dim(x\mid y)+\dim(y)&\leq\dim(x, y)\\
			&\leq\Dim(x\mid y)+\dim(y)\\
			&\leq\Dim(x,y)\\
			&\leq\Dim(x\mid y)+\Dim(y)\,.
			\end{align*}
		\end{thm}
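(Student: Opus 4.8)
The plan is to divide the symmetry-of-algorithmic-information identity stated just above the theorem by $r$ and pass to limits, invoking only the standard super- and subadditivity relations between $\liminf$ and $\limsup$. Concretely, write $a_r = K_r(x)/r$ and $b_r = K_{r,r}(y|x)/r$, so that $K_r(x,y) = K_r(x) + K_{r,r}(y|x) + o(r)$ becomes $K_r(x,y)/r = a_r + b_r + o(1)$. Since an $o(1)$ perturbation alters neither the limit inferior nor the limit superior of a sequence, we get $\dim(x,y) = \liminf_{r\to\infty}(a_r + b_r)$ and $\Dim(x,y) = \limsup_{r\to\infty}(a_r + b_r)$, while by definition $\dim(x) = \liminf_r a_r$, $\Dim(x) = \limsup_r a_r$, $\dim(y|x) = \liminf_r b_r$, and $\Dim(y|x) = \limsup_r b_r$.

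The four desired inequalities are then immediate from the elementary chain, valid for any bounded real sequences $(a_r)$ and $(b_r)$:
\[
\liminf a_r + \liminf b_r \,\leq\, \liminf(a_r + b_r) \,\leq\, \liminf a_r + \limsup b_r \,\leq\, \limsup(a_r + b_r) \,\leq\, \limsup a_r + \limsup b_r\,.
\]
Boundedness causes no trouble: each of $K_r(x)$, $K_{r,r}(y|x)$, and $K_r(x,y)$ is $O(r)$ up to an additive constant, so all four quantities lie in a bounded interval and every sum above is well defined. Substituting the identifications from the previous paragraph into this chain reproduces exactly the statement of the theorem.

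The only step needing slight care is the middle inequality $\liminf(a_r + b_r) \leq \liminf a_r + \limsup b_r$: choosing a subsequence $(r_k)$ along which $a_{r_k} \to \liminf_r a_r$, we have $b_{r_k} \leq \limsup_r b_r + o(1)$, hence $\liminf_r(a_r+b_r) \leq \liminf_k (a_{r_k} + b_{r_k}) \leq \liminf_r a_r + \limsup_r b_r$; the remaining three inequalities are the textbook facts that $\liminf$ is superadditive, that $\limsup$ is subadditive, and that $\limsup$ of a sum dominates a $\liminf$ summand plus the complementary $\limsup$ summand. I do not expect a genuine obstacle here, since all the mathematical content resides in the symmetry of algorithmic information, which is already available; the remainder is routine manipulation of liminf and limsup.
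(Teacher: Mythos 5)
Your proposal is correct and matches the paper's own treatment: the paper likewise derives Theorem~\ref{thm:chainrule} immediately from the Euclidean symmetry of algorithmic information $K_r(x,y)=K_r(x)+K_{r,r}(y|x)+o(r)$ together with the elementary superadditivity/subadditivity chain for $\liminf$ and $\limsup$ of sums. Your filling in of the middle inequalities via subsequences is exactly the routine verification the paper leaves implicit.
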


		In addition to conditional dimensions, which are based on bounded-precision access to the given point $y$, we consider \emph{relativized} dimensions, which allow for arbitrary access to a given point or set. By making the fixed universal machine $U$ an oracle machine, the algorithmic information quantities above may be defined relative to any oracle $A\subseteq\N$. The definitions of $K^A(\sigma)$, $K^A(\sigma\mid \tau)$, $K^A_r(x)$, $K^A_{r}(x\mid y)$, $\dim^A({x})$, $\Dim^A({x})$, $\dim^A(x\mid y)$, and $\Dim^A(x\mid y)$ all exactly mirror their unrelativized versions, except that $U$ is permitted to query membership in $A$ as a computational step. For $y\in\R^n$, we let $A_y\subseteq\N$ be the set defined by $i\in A_y\iff \dot{y}_i=1$, and we write $\dim^y(x)$ and $\Dim^y(x)$ as shorthand for $\dim^{A_y}(x)$ and $\Dim^{A_y}(x)$.

		Relativization and conditioning never increase the dimension of a point; the length of an instruction to ignore the additional information is asymptotically negligible. The following lemma formalizes the related intuition that, when approximating a point $x$, arbitrary access to another point $y$ is asymptotically at least as helpful as bounded-precision access to $y$.

		\begin{lem}[\cite{LutLut18}]\label{lem:relcond}
			If $x\in\R^m$, $y\in\R^n$, and $r\in\N$, then $K_r^y(x)\leq K_r(x\mid y)+O(\log n)$, and therefore $\dim^y(x)\leq\dim(x\mid y)$ and $\Dim^y(x)\leq\Dim(x\mid y)$.
		\end{lem}

		\subsection{Point-to-Set Principle}
		Algorithmic fractal dimensions were conceived as effective versions of classical Hausdorff dimension and packing dimension~\cite{Lutz03b,AHLM07}. J. H. Lutz and N. Lutz~\cite{LutLut18} established the following \emph{point-to-set principle}, which uses relativization to precisely characterize the relationship of these dimensions to their non-algorithmic precursors.
		\begin{thm}[\cite{LutLut18}]\label{thm:p2s}
			 For every $E\subseteq\R^n$, the Hausdorff dimension and packing dimension of $E$ are
			\begin{enumerate}
				\item $\displaystyle \dimH(E)=\adjustlimits\min_{A\subseteq\N}\sup_{{x}\in E}\,\dim^A({x})\,,$
				\item $\displaystyle\dimP(E)=\adjustlimits\min_{A\subseteq\N}\sup_{{x}\in E}\,\Dim^A({x})\,.$
			\end{enumerate}
		\end{thm}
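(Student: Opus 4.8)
The plan is to prove each part by establishing the two inequalities separately; the arguments for (a) and (b) run in parallel, with $\liminf$ and Hausdorff dimension replaced by $\limsup$ and packing dimension, and the Kraft-style covering estimates replaced by box-counting estimates. The one combinatorial fact underlying everything is that, relative to any oracle $A$, there are fewer than $2^k$ rationals $p$ with $K^A(p)<k$, since there are fewer than $2^k$ binary strings of length below $k$.

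First I would prove, for every $A$, the ``easy'' inequalities $\dim_H(E)\le\sup_{x\in E}\dim^A(x)$ and $\dim_P(E)\le\sup_{x\in E}\Dim^A(x)$. Writing $s$ for the relevant supremum (finite, else nothing to prove), we have $E\subseteq\{x:\dim^A(x)\le s\}$, so it suffices to bound the dimension of this set. Fix a rational $t>s$. If $\dim^A(x)<t$ then $K^A_r(x)<tr$ for infinitely many $r$, so $x$ lies in some $B_{2^{-r}}(p)$ with $r$ arbitrarily large and $K^A(p)<tr$; hence for every $N$ the set $\{x:\dim^A(x)<t\}$ is covered by $\{B_{2^{-r}}(p):r\ge N,\,K^A(p)<tr\}$. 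For any $u>t$ the $u$-dimensional Hausdorff sum of this cover is at most $2^u\sum_{r\ge N}2^{(t-u)r}$, which tends to $0$ as $N\to\infty$, so $H^u$ of the set is $0$; since $u>s$ was arbitrary, $\dim_H(E)\le s$. For packing, the same covers show that $F_N:=\{x:(\forall r\ge N)\,K^A_r(x)<tr\}$ has upper box dimension at most $t$; since $\{x:\Dim^A(x)<t\}=\bigcup_N F_N$, the characterization $\dim_P(\cdot)=\inf\{\sup_i\overline{\dim}_B(E_i)\}$ gives $\dim_P(\{x:\Dim^A(x)<t\})\le t$, and since $t>s$ was arbitrary, $\dim_P(E)\le s$.

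Next I would construct, in each part, a single oracle achieving the supremum, which simultaneously proves the reverse inequality and justifies writing $\min$. For (a), with $s=\dim_H(E)$: for each rational $t>s$ and each $j\in\N$, since $H^t(E)=0$ there is a countable cover of $E$ by balls $B_{\rho_{t,j,i}}(q_{t,j,i})$ with $\sum_i(2\rho_{t,j,i})^t<2^{-j}$ and all diameters at most $2^{-j}$; let $A$ encode all these countably many covers. Kraft's inequality, applied to the weights $(2\rho_{t,j,i})^t$, yields a prefix code in which index $i$ costs at most $t\log_2(1/(2\rho_{t,j,i}))+O(1)$ bits, so for $x\in B_{\rho_{t,j,i}}(q_{t,j,i})$ we get $K^A(q_{t,j,i})\le t\log_2(1/(2\rho_{t,j,i}))+O(\log j)$. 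Setting $r=\lfloor\log_2(1/\rho_{t,j,i})\rfloor$, which exceeds $j$ by the diameter bound, makes $q_{t,j,i}\in B_{2^{-r}}(x)$ and $K^A_r(x)<tr+O(\log r)$; since such $r$ grow without bound as $j\to\infty$, $\dim^A(x)\le t$, so $\dim^A(x)\le s$ for all $x\in E$, giving $\sup_{x\in E}\dim^A(x)\le\dim_H(E)$. For (b), using $\dim_P(E)=\inf\{\sup_i\overline{\dim}_B(E_i):E=\bigcup_i E_i\}$: for each rational $t>\dim_P(E)$ fix a decomposition $E=\bigcup_i E_{t,i}$ with every $\overline{\dim}_B(E_{t,i})<t$, and let $A$ encode, for each $i$ and each sufficiently large $r$, a list of fewer than $2^{tr}$ rational centers whose radius-$2^{-r}$ balls cover $E_{t,i}$. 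Then for $x\in E_{t,i}$ and all large $r$, naming the correct center costs $tr+O(1)$ bits, so $K^A_r(x)\le tr+O(\log r)$ and $\Dim^A(x)\le t$, whence $\Dim^A(x)\le\dim_P(E)$ and $\sup_{x\in E}\Dim^A(x)\le\dim_P(E)$.

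I expect the main obstacle to be bookkeeping: verifying that all the logarithmic overheads — describing the parameters $t$, $j$, $i$ and the precision $r$ — are genuinely $o(r)$ and do not interfere, and, on the packing side, carefully moving between the packing-measure definition of Section~\ref{sec:classical} and the modified-upper-box-dimension form $\dim_P(E)=\inf\{\sup_i\overline{\dim}_B E_i\}$, since it is the latter that makes both the covering estimate in the easy direction and the encoding in the hard direction go through. The remaining ingredients are routine counting and geometric-series estimates.
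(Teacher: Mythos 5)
The paper does not prove Theorem~\ref{thm:p2s}; it is imported from J. Lutz and N. Lutz~\cite{LutLut17}, so there is no in-paper proof to compare against. Your argument is correct and is essentially the proof from that reference: the inequalities $\dim_H(E)\le\sup_{x\in E}\dim^A(x)$ and $\dim_P(E)\le\sup_{x\in E}\Dim^A(x)$ for every $A$ by counting the fewer than $2^{k}$ rational points of complexity below $k$, and the reverse directions by encoding near-optimal Hausdorff covers (respectively, box-counting covers of a countable decomposition witnessing the modified upper box characterization of $\dim_P$) into a single oracle and pricing indices via a Kraft/Shannon--Fano code, leaving only routine details to fill in (constructing the prefix code effectively relative to $A$, rationalizing centers and radii, and observing that the constructed oracle attains the minimum).
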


		We call $A\subseteq\N$ a \emph{Hausdorff oracle} for a set $E\subseteq\R^n$ if $\dimH(E)=\sup_{x\in E}\Dim^A(x)$ and a \emph{packing oracle} for $E$ if $\dimP(E)=\sup_{x\in E}\dim^A(x)$. Notice that because relativization cannot increase dimension, Theorem~\ref{thm:p2s} implies that pairing a Hausdorff oracle for $E$ with any other oracle results in another Hausdorff oracle for $E$, and similarly for packing oracles.
		
		This theorem allows us to prove lower bounds on classical dimensions in a pointwise manner. In a typical proof using this technique, we apply Theorem~\ref{thm:p2s} twice, in two different ways. For the sake of clarity about how it is being applied each time, we enumerate its immediate consequences in the following corollary.

		\begin{cor}\label{cor:p2suses}
			Let $E\subseteq\R^n$.
			\begin{enumerate}
				\item There is an oracle $A\subseteq \N$ such that $\dimH(E)=\sup_{x\in E}\dim^A(x)$.
				\item For every $A\subseteq \N$ and every $\varepsilon>0$, there is a point $x\in E$ such that $\dim^A(x)>\dimH(E)-\ve$.
				\item There is an oracle $A\subseteq \N$ such that $\dimP(E)=\sup_{x\in E}\Dim^A(x)$.
				\item For every $A\subseteq \N$ and every $\varepsilon>0$, there is a point $x\in E$ such that $\Dim^A(x)>\dimP(E)-\ve$.
			\end{enumerate}
		\end{cor}

		Unlike earlier applications of this bounding technique~\cite{LutLut18,LutStu20}, the proofs in Sections~\ref{sec:int} and~\ref{sec:prod} do not directly invoke Kolmogorov complexity; the only tools needed are Lemma~\ref{lem:pres}, Theorem~\ref{thm:chainrule}, Lemma~\ref{lem:relcond}, and Theorem~\ref{thm:p2s} (in the form of Corollary~\ref{cor:p2suses}). The proof of our product characterization of packing dimension in Section~\ref{sec:packprod} employs both the point-to-set principle and a construction that explicitly uses Kolmogorov complexity.
		
	\section{Intersections of Fractals}\label{sec:int}
		In this section we prove Theorem~\ref{thm:hausint}. We then use a symmetric argument to give a new proof of the corresponding statement for packing dimension, which was originally proved by Falconer~\cite{Falc94}.
		The case of Theorem~\ref{thm:hausint} in which $E,F\subseteq\R^n$ are Borel sets was also proved by Falconer~\cite{Falc14}. The special case where $E\subseteq\R^2$ is a Borel set and $F$ is a vertical line is Marstrand's slicing theorem~\cite{Mars54,BisPer16}. Closely related results, which also place restrictions on $E$ and $F$,  were proved by Mattila~\cite{Matt84,Matt85} and Kahane~\cite{Kaha86}.
		{\renewcommand{\thethm}{\ref{thm:hausint}}
		\begin{thm}
			For all $E,F\subseteq\R^n$, and for (Lebesgue) almost all $z\in\R^n$,
			\begin{equation}\label{eq:hausint1}
			\dimH(E\cap(F+z))\leq\max\{0,\dimH(E\times F)-n\}\,,
			\end{equation}
			where $F+z=\{x+z:x\in F\}$. 
		\end{thm}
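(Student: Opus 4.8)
The plan is to apply the point-to-set principle (Theorem~\ref{thm:p2s}a) twice, together with the chain rule: once to pull a single oracle out of the hypothesis on $E\times F$, and once to convert a pointwise dimension bound on each translate-slice into the desired bound on Hausdorff dimension.

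Write $d=\dim_H(E\times F)$ and, using Theorem~\ref{thm:p2s}a, fix an oracle $A\subseteq\N$ witnessing the minimum for $E\times F$, so that $\dim^A(u,v)\le d$ for every $(u,v)\in E\times F$. Now fix $z\in\R^n$ and $x\in E\cap(F+z)$. Then $x\in E$ and $x-z\in F$, so $(x,x-z)\in E\times F$ and hence $\dim^A(x,x-z)\le d$. The linear bijection $g\colon\R^{2n}\to\R^{2n}$ given by $g(u,v)=(v,v-u)$ is computable and bi-Lipschitz and satisfies $g(z,x)=(x,x-z)$, so Lemma~\ref{lem:pres}, relativized to $A$, gives $\dim^A(z,x)=\dim^A(x,x-z)\le d$. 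Applying the chain rule (Theorem~\ref{thm:chainrule}, relativized to $A$) to the pair $(z,x)$ and then Lemma~\ref{lem:relcond} yields
\[\dim^{A,z}(x)\;\le\;\dim^A(x\mid z)\;\le\;\dim^A(z,x)-\dim^A(z)\;\le\;d-\dim^A(z)\,.\]

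The crucial point is that $\dim^A(z)=n$ for Lebesgue-almost-every $z\in\R^n$. Indeed $\dim^A(z)\le n$ always, and for each rational $q<n$ the set $\{z\in\R^n:\dim^A(z)\le q\}$ has Hausdorff dimension at most $q$ by Theorem~\ref{thm:p2s}a (witnessed by $A$), hence has $n$-dimensional Lebesgue measure zero; the union of these sets over rational $q<n$ is therefore null. Fix $z$ outside this null set, so $\dim^A(z)=n$. If $d\ge n$, then the display gives $\dim^{A,z}(x)\le d-n$ for every $x\in E\cap(F+z)$, and Theorem~\ref{thm:p2s}a applied to $E\cap(F+z)$ with the oracle $(A,z)$ gives $\dim_H(E\cap(F+z))\le d-n=\max\{0,d-n\}$. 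If $d<n$, then the display forces $0\le\dim^{A,z}(x)\le d-n<0$, which is impossible, so $E\cap(F+z)=\emptyset$ and $\dim_H(E\cap(F+z))=0=\max\{0,d-n\}$. In either case~\eqref{eq:hausint1} holds for almost every $z$.

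The only genuinely nontrivial ingredient is the almost-everywhere statement $\dim^A(z)=n$, and this is exactly where the ``almost every $z$'' in the theorem originates; the remaining steps are routine applications of bi-Lipschitz invariance, the chain rule, and Lemma~\ref{lem:relcond}. The one bookkeeping point to verify carefully is that Lemma~\ref{lem:pres}, Theorem~\ref{thm:chainrule}, and Lemma~\ref{lem:relcond} all relativize verbatim to an arbitrary oracle $A$, which they do since the underlying symmetry-of-information estimate relativizes.
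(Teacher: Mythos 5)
Your proof is correct and follows essentially the same route as the paper: the point-to-set principle supplies the oracle $A$ for $E\times F$ and the upper bound on $\dim_H(E\cap(F+z))$, bi-Lipschitz invariance turns $(x,x-z)$ into $(z,x)$, and the chain rule plus Lemma~\ref{lem:relcond} give $\dim^{A,z}(x)\le\dim_H(E\times F)-\dim^A(z)$. The only cosmetic differences are that you derive the almost-everywhere fact $\dim^A(z)=n$ directly from Theorem~\ref{thm:p2s}a rather than citing Martin-L\"of randomness, and you phrase the final step as a supremum bound (handling the case $\dim_H(E\times F)<n$ explicitly) instead of the paper's $\ve$-argument.
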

		\addtocounter{thm}{-1}}
		\begin{proof}
			Let $E,F\subseteq\R^n$ and $z\in\R^n$. If $E\cap(F+z)=\emptyset$, then inequality~\eqref{eq:hausint1} holds trivially, so assume that the intersection is nonempty.
			Applying Corollary~\ref{cor:p2suses}(1), let $A\subseteq\N$ be a Hausdorff oracle for $E\times F$, i.e.,
			\begin{equation}\label{eq:hausint2}
			\dimH(E\times F)=\sup_{(x,y)\in E\times F}\dim^A(x,y)\,.
			\end{equation}
			Fix $\varepsilon>0$. Corollary~\ref{cor:p2suses}(2), applied to $ E\cap(F+z)$, gives a point $x\in E\cap(F+z)$ such that
			\begin{equation}\label{eq:hausint3}
			\dim^{A,z}(x)\geq\dimH(E\cap(F+z))-\ve\,.
			\end{equation}
			Since $(x,x-z)\in E\times F$, we have
			\begin{align*}
				\dimH(E\times F)&\geq\dim^A(x,x-z)\tag{by equation~\eqref{eq:hausint2}}\\
				&=\dim^A(x,z)\tag{by Lemma~\ref{lem:pres}}\\
				&\geq\dim^A(z)+\dim^{A}(x\mid z)\tag{by Theorem~\ref{thm:chainrule} relative to $A$}\\
				&\geq\dim^A(z)+\dim^{A,z}(x)\tag{by Lemma~\ref{lem:relcond} relative to $A$}\\
				&\geq\dim^A(z)+\dimH(E\cap(F+z))-\ve\,.\tag{by inequality~\eqref{eq:hausint3}}
			\end{align*}
			Letting $\ve\to 0$, we have
			\[\dimH(E\cap(F+z))\leq\dimH(E\times F)-\dim^A(z)\,.\]
			Thus, inequality~\eqref{eq:hausint1} holds whenever $\dim^A(z)=n$. In particular, it holds when $z$ is Martin-L\"of random relative to $A$, i.e., for Lebesgue almost all $z\in\R^n$~\cite{LiVit08,Mart66}.
		\end{proof}

		The corresponding intersection formula for packing dimension has been shown for arbitrary $E,F\subseteq\R^n$ by Falconer~\cite{Falc94}. That proof is not difficult or long, but an algorithmic dimensional proof is presented here as an instance where this technique applies symmetrically to both Hausdorff and packing dimension.
		\begin{thm}[\cite{Falc94}]\label{thm:packint}
			For all $E,F\subseteq\R^n$, and for (Lebesgue) almost all $z\in\R^n$,
			\[\dimP(E\cap(F+z))\leq\max\{0,\dimP(E\times F)-n\}\,.\]
		\end{thm}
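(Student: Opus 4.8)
The plan is to imitate the proof of Theorem~\ref{thm:hausint} line for line, replacing each ingredient by its packing-dimension counterpart and, crucially, selecting the chain-rule inequality that leaves a \emph{lower} dimension in the slot occupied by the translation vector $z$. Fix $E,F\subseteq\R^n$ and $z\in\R^n$, and assume $E\cap(F+z)\neq\emptyset$, since otherwise the inequality is immediate. By Theorem~\ref{thm:p2s}b there is an oracle $A\subseteq\N$ with
\[\dim_P(E\times F)=\sup_{(x,y)\in E\times F}\Dim^A(x,y)\,,\]
and, applying Theorem~\ref{thm:p2s}b to the set $E\cap(F+z)$ relative to the oracle $(A,z)$, for every $\ve>0$ there is a point $x\in E\cap(F+z)$ with $\Dim^{A,z}(x)\geq\dim_P(E\cap(F+z))-\ve$.

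Since $x\in F+z$, the point $x-z$ lies in $F$, so $(x,x-z)\in E\times F$. The map $(u,v)\mapsto(u,u-v)$ on $\R^{2n}$ is computable and bi-Lipschitz, so Lemma~\ref{lem:pres} gives $\Dim^A(x,x-z)=\Dim^A(x,z)$, and reordering coordinates (also computable and bi-Lipschitz) gives $\Dim^A(x,z)=\Dim^A(z,x)$. I would then combine these observations with the chain rule:
\begin{align*}
\dim_P(E\times F)&\geq\Dim^A(z,x)\\
&\geq\dim^A(z)+\Dim^A(x|z)\\
&\geq\dim^A(z)+\Dim^{A,z}(x)\\
&\geq\dim^A(z)+\dim_P(E\cap(F+z))-\ve\,,
\end{align*}
where the four steps use, respectively, the choice of $A$ together with the bi-Lipschitz invariances just noted, the inequality $\dim(z)+\Dim(x|z)\leq\Dim(z,x)$ from Theorem~\ref{thm:chainrule}, Lemma~\ref{lem:relcond}b, and the choice of $x$. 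Letting $\ve\to0$ yields $\dim_P(E\cap(F+z))\leq\dim_P(E\times F)-\dim^A(z)$.

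To finish, I would observe that $\dim^A(z)=n$ whenever $z$ is Martin-L\"of random relative to $A$, hence for Lebesgue almost every $z\in\R^n$; for such $z$ the displayed bound reads $\dim_P(E\cap(F+z))\leq\dim_P(E\times F)-n$, and combining it with the nonnegativity of $\dim_P$ gives the stated $\max\{0,\dim_P(E\times F)-n\}$ bound. I do not anticipate a genuine obstacle here: the argument is a faithful transcription of the Hausdorff case, and the only point requiring attention is the bookkeeping of lower versus upper dimensions. One must invoke the chain-rule line $\dim(z)+\Dim(x|z)\leq\Dim(z,x)$ --- neither the all-lower nor the all-upper version --- so that the almost-sure identity $\dim^A(z)=n$, a statement about \emph{lower} effective dimension, can be applied in the right position. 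That is exactly where the duality between Hausdorff and packing dimension surfaces in the algorithmic proof.
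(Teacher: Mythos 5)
Your proof is correct and follows essentially the same route as the paper's: apply Theorem~\ref{thm:p2s}b twice, use Lemma~\ref{lem:pres} to pass from $(x,x-z)$ to $(x,z)$, invoke the mixed chain-rule inequality $\dim^A(z)+\Dim^A(x|z)\leq\Dim^A(z,x)$ together with Lemma~\ref{lem:relcond}, and finish with $\dim^A(z)=n$ for almost every $z$. Your closing remark correctly identifies the one delicate point, namely that the lower dimension must sit in the $z$-slot so that almost-everywhere randomness applies.
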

		\begin{proof}
			As in Theorem~\ref{thm:hausint}, we may assume that the intersection is nonempty. Apply Corollary~\ref{cor:p2suses}(3) to find a packing oracle $B\subseteq\N$ for $E\times F$, i.e.,
			\begin{equation}\label{eq:packintersect1}
				\dimP(E\times F)=\sup_{(x,y)\in E\times F}\Dim^{B}(x,y)
			\end{equation}
			and, given $\ve>0$, apply Corollary~\ref{cor:p2suses}(4) to $E\cap (F+z)$ to find a point $y\in E\cap(F+z)$ satisfying
			\begin{equation}\label{eq:packintersect2}
				\Dim^{B,z}(y)\geq\dimP(E\cap(F+z))-\ve\,.
			\end{equation}
			Then $(y,y-z)\in E\times F$, and we may proceed much as before:
			\begin{align*}
			\dimP(E\times F)&\geq\Dim^B(y,y-z)\tag{by equation~\eqref{eq:packintersect1}}\\
			&=\Dim^B(y,z)\tag{by Lemma~\ref{lem:pres}}\\
			&\geq \dim^B(z)+\Dim^{B}(y\mid z)\tag{by Theorem~\ref{thm:chainrule} relative to $B$}\\
			&\geq \dim^B(z)+\Dim^{B,z}(y)\tag{by Lemma~\ref{lem:relcond} relative to $B$}\\
			&\geq \dim^B(z)+\dimP(E\cap(F+z))-\ve\tag{by inequality~\eqref{eq:packintersect2}}\,.
			\end{align*}
			Again, $\dim^B(z)=n$ for almost every $z\in\R^n$, so this completes the proof.
		\end{proof}
	\section{Product Inequalities for Fractals}\label{sec:prod}

		In this section we give new proofs of four known product inequalities for fractal dimensions.	Inequality~\eqref{eq:prod1} below, which was stated in the introduction as Theorem~\ref{thm:mars}, is due to Marstrand~\cite{Mars54}. The other three inequalities are due to Tricot~\cite{Tric82}. Reference~\cite{Matt95} gives a more detailed account of this history.

		\begin{thm}	[\cite{Mars54,Tric82}] \label{thm:product}
			For all $E\subseteq\R^m$ and $F\subseteq\R^n$,
			\begin{align}
				\dimH(E)+\dimH(F)&\leq\dimH(E\times F)\label{eq:prod1}\\
				&\leq\dimP(E)+\dimH(F)\label{eq:prod2}\\
				&\leq\dimP(E\times F)\label{eq:prod3}\\
				&\leq\dimP(E)+\dimP(F)\label{eq:prod4}\,.
			\end{align}
		\end{thm}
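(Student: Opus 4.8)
The plan is to translate each of the four inequalities into a statement about the effective dimensions of individual points of $E\times F$ via Theorem~\ref{thm:p2s}, and to close each one using exactly one of the four links in the chain rule of Theorem~\ref{thm:chainrule} — the lower bounds (\ref{eq:prod1}), (\ref{eq:prod3}) using the first and third links together with Lemma~\ref{lem:relcond}, and the upper bounds (\ref{eq:prod2}), (\ref{eq:prod4}) using the second and fourth links together with the elementary monotonicity of Kolmogorov complexity under extra information. Throughout, $(x,y)$ denotes the concatenation in $\R^{m+n}$ of $x\in\R^m$ and $y\in\R^n$, so that the elements of $E\times F\subseteq\R^{m+n}$ are exactly such concatenations and no coordinate change (hence no appeal to Lemma~\ref{lem:pres}) is needed; $\dim_H(E\times F)$ and $\dim^A(x,y)$ then refer to dimension in $\R^{m+n}$. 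We may assume $E$ and $F$ are nonempty, the degenerate cases being immediate. Beyond Theorems~\ref{thm:p2s} and~\ref{thm:chainrule} and Lemma~\ref{lem:relcond}, the only other ingredient is that enlarging the oracle or adding a conditioning argument never increases $\dim$ or $\Dim$; in particular $\dim^{(A,B)}(x)\le\dim^A(x)$, $\Dim^{(A,B)}(x)\le\Dim^A(x)$, $\dim^A(y|x)\le\dim^A(y)$, and $\Dim^A(y|x)\le\Dim^A(y)$. Both the chain rule and Lemma~\ref{lem:relcond} relativize verbatim to an arbitrary oracle, and I use them in relativized form.

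\textbf{Lower bounds (\ref{eq:prod1}) and (\ref{eq:prod3}).} For (\ref{eq:prod1}), apply Theorem~\ref{thm:p2s}a to fix an oracle $A$ \emph{attaining} the minimum for $E\times F$, so that $\dim_H(E\times F)=\sup_{(x,y)\in E\times F}\dim^A(x,y)$. Fix $\ve>0$. Since $\dim_H(E)=\min_B\sup_{x\in E}\dim^B(x)\le\sup_{x\in E}\dim^A(x)$, there is an $x\in E$ with $\dim^A(x)>\dim_H(E)-\ve$; then, using the real $x$ as part of the oracle, $\dim_H(F)\le\sup_{y\in F}\dim^{A,x}(y)$ yields a $y\in F$ with $\dim^{A,x}(y)>\dim_H(F)-\ve$. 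Now $(x,y)\in E\times F$, so by the first link of Theorem~\ref{thm:chainrule} and then Lemma~\ref{lem:relcond}a,
\[\dim_H(E\times F)\ge\dim^A(x,y)\ge\dim^A(x)+\dim^A(y|x)\ge\dim^A(x)+\dim^{A,x}(y)>\dim_H(E)+\dim_H(F)-2\ve\,,\]
and letting $\ve\to0$ proves (\ref{eq:prod1}). Inequality (\ref{eq:prod3}) is obtained the same way, now fixing via Theorem~\ref{thm:p2s}b an oracle $A$ with $\dim_P(E\times F)=\sup_{(x,y)}\Dim^A(x,y)$, still choosing $x\in E$ with $\dim^A(x)>\dim_H(E)-\ve$ and now $y\in F$ with $\Dim^{A,x}(y)>\dim_P(F)-\ve$, and finishing with the chain-rule link $\dim^A(x)+\Dim^A(y|x)\le\Dim^A(x,y)$ and Lemma~\ref{lem:relcond}b.

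\textbf{Upper bounds (\ref{eq:prod2}) and (\ref{eq:prod4}).} For (\ref{eq:prod2}), apply Theorem~\ref{thm:p2s} to fix oracles $A_1$ with $\dim_H(E)=\sup_{x\in E}\dim^{A_1}(x)$ and $A_2$ with $\dim_P(F)=\sup_{y\in F}\Dim^{A_2}(y)$, and set $A=(A_1,A_2)$. Because $\dim_H(E\times F)$ is a minimum over all oracles, $\dim_H(E\times F)\le\sup_{(x,y)\in E\times F}\dim^A(x,y)$, and for every $(x,y)\in E\times F$ the second link of Theorem~\ref{thm:chainrule} with the monotonicity facts above gives
\[\dim^A(x,y)\le\dim^A(x)+\Dim^A(y|x)\le\dim^{A_1}(x)+\Dim^{A_2}(y)\le\dim_H(E)+\dim_P(F)\,,\]
which proves (\ref{eq:prod2}). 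Inequality (\ref{eq:prod4}) is identical, with $A_1$ now chosen via Theorem~\ref{thm:p2s}b so that $\dim_P(E)=\sup_{x\in E}\Dim^{A_1}(x)$, and with the fourth chain-rule link $\Dim^A(x,y)\le\Dim^A(x)+\Dim^A(y|x)$.

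\textbf{Where the care lies.} By design there is no hard step; the only real subtlety is the order of the oracle quantifiers. For the lower bounds one must fix the \emph{single} oracle attaining the point-to-set minimum for $E\times F$ before selecting points, so that $\dim_H(E\times F)$ (resp.\ $\dim_P(E\times F)$) literally equals a supremum that a well-chosen point can push up, and Lemma~\ref{lem:relcond} is exactly what promotes the first chosen point $x$ to oracle information when choosing the second point $y$. For the upper bounds one instead \emph{builds} an oracle by combining the two that attain the minima for $E$ and $F$ separately, exploiting that $\dim_H(E\times F)$ and $\dim_P(E\times F)$ are themselves minima over oracles. Beyond this and the bookkeeping of which of the four chain-rule links matches which of the four product inequalities, nothing deep is required.
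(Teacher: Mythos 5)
Your proposal is correct and takes essentially the same approach as the paper: Theorem~\ref{thm:p2s} to select oracles and near-optimal points, the four links of Theorem~\ref{thm:chainrule} in relativized form, Lemma~\ref{lem:relcond}, and the elementary monotonicity of dimension under added oracle or conditioning information. The differences are only organizational — for (\ref{eq:prod2}) and (\ref{eq:prod4}) you bound every point of $E\times F$ under the joined oracle rather than choosing a near-optimal pair $(u,v)$ as the paper does, and you prove (\ref{eq:prod3}) as a standalone lower bound instead of folding it into the paper's combined chain.
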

		When $E$ and $F$ are Borel sets, it is simple to prove inequality~\eqref{eq:prod1} by using Frostman's Lemma, but the argument for general sets using net measures is considerably more difficult~\cite{Matt95,Falc85}. The new proof of inequality~\eqref{eq:prod1} given here applies to general sets but is similar in length to the previous proof for Borel sets.

		Although previous proofs of inequalities~\eqref{eq:prod1}--\eqref{eq:prod3} for general sets were already short, we present new, algorithmic information-theoretic proofs of those inequalities. Notice the superficial resemblance of Theorem~\ref{thm:product} to Theorem~\ref{thm:chainrule}. This similarity is not a coincidence; each inequality in Theorem~\ref{thm:product} follows from the corresponding line in Theorem~\ref{thm:chainrule}.
		\begin{proof}
			Corollary~\ref{cor:p2suses}(1) gives a Hausdorff oracle for $E\times F$, i.e., a set $A\subseteq\N$ such that
			\begin{equation}\label{eq:prodproof1}
				\dimH(E\times F) = \sup_{z\in E\times F}\dim^A(z)\,,
			\end{equation}
			and, for every $\ve>0$, Corollary~\ref{cor:p2suses}(2) gives points $x\in E$ and $y\in F$ such that
			\begin{equation}\label{eq:prodproof2}
				\dim^A(x)\geq\dimH(E)-\ve
			\end{equation}
			and
			\begin{equation}\label{eq:prodproof3}
				\dim^{A,x}(y)\geq\dimH(F)-\ve\,.
			\end{equation}

			Then we have
			\begin{align*}
				\dimH(E\times F)&\geq\dim^A(x,y)\tag{by equation~\eqref{eq:prodproof1}}
				\\&\geq\dim^A(x)+\dim^{A}(y\mid x)\tag{by Theorem~\ref{thm:chainrule} relative to $A$}
				\\&\geq \dim^A(x)+ \dim^{A,x}(y)\tag{by Lemma~\ref{lem:relcond} relative to $A$}
				\\&\geq\dimH(E)+\dimH(F)-2\ve\,.\tag{by inequalities~\eqref{eq:prodproof2} and~\eqref{eq:prodproof3}}
			\end{align*}
			Since $\ve>0$ was arbitrary, we conclude that inequality~\eqref{eq:prod1} holds.
			
			For inequality~\eqref{eq:prod2}, we use a packing oracle for $E$ and a Hausdorff oracle for $F$. Apply Corollary~\ref{cor:p2suses}(3) to 
			find a set $B\subseteq\N$ such that
			\begin{equation}\label{eq:prodproof4}
				\dimP(E) = \sup_{z\in E}\Dim^B(z)\,,
			\end{equation}
			and apply Corollary~\ref{cor:p2suses}(1) to find a set $C\subseteq\N$ such that
			\begin{equation}\label{eq:prodproof5}
				\dimH(F) = \sup_{z\in F}\dim^C(z)\,.
			\end{equation}
			Given $\varepsilon>0$, apply Corollary~\ref{cor:p2suses}(2) to find a point $(u,v)\in E\times F$ such that
			\begin{equation}\label{eq:prodproof6}
				\dim^{B,C}(u,v)\geq\dimH(E\times F)-\ve\,.
			\end{equation}
			Then we have
			\begin{align*}
				\dimP(E)+\dimH(F)&\geq\Dim^{B}(u)+\dim^{C}(v) \tag{by inequalities~\eqref{eq:prodproof4} and~\eqref{eq:prodproof5}}
				\\&\geq\Dim^{B,C}(u)+\dim^{B,C}(v)\tag{relativization cannot increase dimension}
				\\&\geq\Dim^{B,C}(u\mid v)+\dim^{B,C}(v)\tag{conditioning cannot increase dimension}
				\\&\geq\dim^{B,C}(u,v)\tag{by Theorem~\ref{thm:chainrule} relative to $B,C$}
				\\&\geq\dimH(E\times F)-\ve\,.\tag{by inequality~\eqref{eq:prodproof6}}
			\end{align*}
			Again, $\ve>0$ was arbitrary, so inequality~\eqref{eq:prod2} holds.
			
			For inequalities~\eqref{eq:prod3} and~\eqref{eq:prod4}, we use essentially the same arguments as above. We begin by finding a Hausdorff oracle for $E$ and packing oracles for $E\times F$ and $F$. Apply Corollary~\ref{cor:p2suses}(1) to find a set $B'\subseteq\N$ such that
			\begin{equation}\label{eq:prodproof7}
				\dimH(E) = \sup_{z\in E}\dim^{B'}(z)\,,
			\end{equation}
			and apply Corollary~\ref{cor:p2suses}(3) to find sets $A',C'\subseteq \N$ such that
			\begin{equation}\label{eq:prodproof8}
				\dimP(E\times F) = \sup_{z\in E\times F}\Dim^{A'}(z)\,,
			\end{equation}
			and
			\begin{equation}\label{eq:prodproof9}
				\dimP(F) = \sup_{z\in F}\Dim^{C'}(z)\,,
			\end{equation}

			Given $\varepsilon>0$, apply Corollary~\ref{cor:p2suses}(2) to find a point $y'\in F$ such that
			\begin{equation}\label{eq:prodproof10}
				\dim^{A'}(y')\geq\dimH(F)-\ve\,,
			\end{equation}
			and apply Corollary~\ref{cor:p2suses}(4) to find points $x',u'\in E$ and $v'\in F$ such that
			\begin{equation}\label{eq:prodproof11}
				\Dim^{A',y'}(x')\geq\dimP(E)-\ve
			\end{equation}
			and
			\begin{equation}\label{eq:prodproof12}
				\Dim^{B,C'}(u',v')\geq\dimP(E\times F)-\ve\,,
			\end{equation}
			where $B$ is as in equation~\eqref{eq:prodproof4}. Then we have
			\begin{align*}
				\dimP(E)+\dimP(F)&\geq\Dim^{B}(u')+\Dim^{C'}(v')\tag{by equations~\eqref{eq:prodproof4} and~\eqref{eq:prodproof9}}
				\\&\geq\Dim^{B,C'}(u')+\Dim^{B,C'}(v')\tag{relativization cannot increase dimension}
				\\&\geq\Dim^{B,C'}(u'\mid v')+\Dim^{B,C'}(v')\tag{conditioning cannot increase dimension}
				\\&\geq\Dim^{B,C'}(u',v')\tag{by Theorem~\ref{thm:chainrule} relative to $B,C'$}
				\\&\geq\dimP(E\times F)-\ve\tag{by inequality~\eqref{eq:prodproof12}}
				\\&\geq\Dim^{A'}(x',y')-\ve\tag{by equation~\eqref{eq:prodproof8}}
				\\&\geq\Dim^{A'}(x'\mid y')+\dim^{A'}(y')-\ve\tag{by Theorem~\ref{thm:chainrule} relative to $A'$}
				\\&\geq\Dim^{A',y'}(x')+\dim^{A'}(y')-\ve\tag{by Lemma~\ref{lem:relcond} relative to $A'$}
				\\&\geq\dimP(E)+\dimH(F)-3\ve\,.\tag{by inequalities~\eqref{eq:prodproof10} and~\eqref{eq:prodproof11}}
			\end{align*}
			Letting $\ve\to0$ completes the proof.
		\end{proof}

		\section{Product Characterization of Packing Dimension}\label{sec:packprod}

		Theorem~\ref{thm:product} is clearly tight in the sense that, for each inequality~\eqref{eq:prod1}--\eqref{eq:prod4}, there exist sets $E$ and $F$ that will make equality hold; for example, let $E=F=\emptyset$. In fact, for inequalities~\eqref{eq:prod1},~\eqref{eq:prod3}, and~\eqref{eq:prod4}, an even stronger tightness property holds: For every set $E\subseteq\R^m$, there is a set $F$ (e.g., any singleton) such that equality holds. Does this latter property hold for inequality~\eqref{eq:prod2}? The answer is much less obvious, but Bishop and Peres~\cite{BisPer96} proved that for every analytic set $E$, there are compact sets $F$ that make inequality~\eqref{eq:prod2} approach equality, in the sense that
		\[\dimP(E)=\sup\big(\dimH(E\times F)-\dimH(F)\big)\,,\]
		where the supremum is over all sets compact sets $F\subseteq\R^n$. We now prove that for arbitrary $E\subseteq\R^n$, there are (not necessarily compact) sets $F\subseteq\R^n$ that make inequality~\eqref{eq:prod2} approach equality.
		{\renewcommand{\thethm}{\ref{thm:packprod}}
		\begin{thm}
			For every set $E\subseteq\R^n$,
			\[\dimP(E)=\sup_{F\subseteq\R^n}\big(\dimH(E\times F)-\dimH(F)\big)\,.\]
		\end{thm}
		\addtocounter{thm}{-1}}
		\begin{proof}
		Let $E\subseteq\R^n$, and let $\varepsilon>0$. By inequality~\eqref{eq:prod2}, it suffices to show that there is some set $F\subseteq\R^n$ satisfying
		\begin{equation}\label{eq:thm:packprod:goal}
			\dimP(E)\leq \dimH(E\times F)-\dimH(F)+\varepsilon\,.
		\end{equation}

		We define the set $F\subseteq\R^n$ by
		\begin{equation}\label{eq:choiceofF}
			F=\{y\in\R^n:\dim(y)\leq n-\dimP(E)+\varepsilon\}\,.
		\end{equation}
		Then by Theorem~\ref{thm:p2s},
		\[\dimH(F)\leq \sup_{y\in F}\dim(y)\leq n-\dimP(E)+\varepsilon\,,\]
		so it only remains to show that
		\begin{equation}\label{eq:goal}
			\dimH(E\times F)\geq n\,.
		\end{equation}

		Corollary~\ref{cor:p2suses}(1) tells us that there is a Hausdorff oracle for $E\times F$, i.e., an oracle $A\subseteq\N$ such that
		\begin{equation}\label{eq:choiceofA}
			\dimH(E\times F)=\sup_{(x,y)\in E\times F}\dim^A(x,y)\,.
		\end{equation}
		We now construct a point $(x,y)\in E\times F$ whose dimension relative to $A$ is close to $n$. Let \begin{equation}\label{eq:choiceofdelta}
			\delta\in\left(0,\frac{\varepsilon}{2n+1}\right)\,.
		\end{equation}
		By Corollary~\ref{cor:p2suses}(4), there is some point $x\in E$ such that
		\begin{equation}\label{eq:choiceofx}
			\Dim^{A}(x)>\dimP(E)-\delta\,.
		\end{equation}

		We want the point $y\in F$ to complement $x$, in the sense that $y$'s information density should be high at precisions where $x$'s information density is low, so that the pair $(x,y)$ can never have information density much lower than $n$. At the same time, in order for $y$ to belong to $F$, $y$ needs to have low information density --- arbitrarily close to $n-\dimP(E)+\varepsilon$ --- at infinitely many precisions.

		To this end, we construct $y$ by appending zeros to its binary expansion when the information density of $(x,y)$ exceeds $n$, and appending random bits when that information density drops below $n$. This decision --- whether to append random bits or zeros --- is made at each of an exponentially increasing sequence of precisions.

		Formally, for each $j\in\N$, let $r_j=\lfloor\left(1+\delta\right)^j\rfloor$. Notice that $r_{j+1}-r_j=\delta r_j+O(1)$ and $\log(r_j)=\Theta(j)$. Define the point $y\in[0,1]^n$ by, for all $j\in\N$,
		\begin{equation*}\label{eq:y}
			\dot{y}[nr_j:nr_{j+1}]=
			\begin{cases}
				u_j
				&\text{if }K_{r_j}^{A}(x,y)> nr_j\\
				v_j
				&\text{otherwise}\,,
			\end{cases}
		\end{equation*}
		where $u_j=0^{nr_{j+1}-nr_j}$ and $v_j\in\{0,1\}^*$ is an $(nr_{j+1}-nr_j)$-bit string that is random relative to $A$, $x$, and $\dot{y}[0:nr_j]$, in the sense that $K^{A,x}(v_j\mid \dot{y}[0:nr_j])\geq nr_{j+1}-nr_j$. Hence, for all $r\leq r_{j+1}$,
		\begin{equation}\label{eq:vrandom}
			K^{A,x}(v_j[0:nr-nr_j]\mid \dot{y}[0:nr_j])= nr-nr_j+O(j)\,.
		\end{equation}

		Constructed in this way, the point $y$ has two key properties. First, the dimension and strong dimension of $(x,y)$ are both close to $n$. Second, access to information about $A$ and $x$ cannot affect the complexity of $y$ very much.  These properties are formalized in the following two lemmas.

	\begin{lem}\label{lem:xy}
	 	$n-\delta n \leq \dim^{A}(x,y)\leq\Dim^{A}(x,y)\leq n+2\delta n$.
	\end{lem}
	\begin{proof}
		First, we will prove by induction on $j$ that there is a constant $c$ such that, for all $j\geq 1$ and all integer precisions $r\in[r_{j-1},r_{j}]$,
		\begin{equation}\label{eq:nrj}
			(n-\delta n)r-cj^2\leq K^{A}_r(x,y)\leq (n+2\delta n) r+cj^2\,.
		\end{equation}
	
		Recall that $r_j=\lfloor (1+\delta)^j\rfloor$. We can choose $c$ large enough to make inequality~\eqref{eq:nrj} hold for $r=r_0=1$.
	
		Fix $j\geq 1$ and $r\in[r_j, r_{j+1}]$, and let $c$ be a sufficiently large constant satisfying inequality~\eqref{eq:nrj} for $r=r_j$. We consider two cases.

		First, if $K_{r_j}^{A}(x,y)\leq nr_j$,
		then $\dot{y}[nr_j:nr_{j+1}]=v_j$. Therefore, by equations~\eqref{eq:symmetry} and~\eqref{eq:convert}, assuming that $c$ is large enough to accommodate the implicit constants in each appearance of $O(j)$, we have
		\begin{align*}
			K_r^{A}(x,y)&\geq K_{r_j}^{A}(x,y)+K^{A}(v_j[0:nr-nr_j)]\mid \dot{(x,y)}[0:nr_j])+O(j)\\
			&\geq K_{r_j}^{A}(x,y)+nr-nr_j-O(j) \tag{by equation~\eqref{eq:vrandom}}\\
			&\geq (n-\delta n) r_j-cj^2+nr-nr_j-O(j) \tag{by inductive hypothesis}\\
			&=nr-\delta n r_j-cj^2-O(j)\\
			&\geq (n- \delta n) r - cj^2-O(j)\\
			&\geq (n-\delta n) r - c\cdot (j+1)^2
		\end{align*}
		and
		\begin{align*}
			K_r^{A}(x,y)&\leq K_{r_j}^{A}(x,y)+K^{A}(\dot{(x,y)}[2nr_j:2nr])+O(j)\\
			&\leq nr_j+K^{A}(\dot{(x,y)}[2nr_j:2nr])+O(j)\\
			&\leq nr_j+2nr-2nr_j+O(j)\\
			&\leq nr_j+2n(1+\delta)r_j-2nr_j+O(j)\\
			&= (n-2\delta n)r_j+O(j)\\
			&\leq (n-2\delta n)r+c\cdot (j+1)^2\,.
		\end{align*}

		Otherwise, if $K_{r_j}^{A}(x,y)> nr_j$, then $\dot{y}[nr_j:nr_{j+1}]=u_j$. Applying equations~\eqref{eq:symmetry} and~\eqref{eq:convert} and again assuming that $c$ is sufficiently large to accommodate the constants in each $O(j)$, we have
		\begin{align*}
			K_r^{A}(x,y)&\geq K^{A}_{r_j}(x,y)-O(j)\\
			&\geq nr_j-O(j)\\
			&\geq (1-\delta^2)n r_j-O(j)\\
			&= (n-\delta n)(1+\delta)r_j-O(j)\\
			&\geq (n-\delta n)r-O(j)\\
			&\geq (n-\delta n)r - c\cdot (j+1)^2
		\end{align*}
		and
		\begin{align*}
			K_r^{A}(x,y)&\leq K^{A}_{r_j}(x,y)+K^{A}(0^{nr-nr_j})+K^{A}(\dot{x}[nr:nr_j])+O(j)\\
			&\leq K^{A}_{r_j}(x,y)+nr-nr_j+O(j)\\
			&\leq nr_j+cj^2+nr-nr_j+O(j)\tag{by inductive hypothesis}\\
			&\leq nr+c\cdot (j+1)^2\,.
		\end{align*}
	
		Thus, by induction, there is a constant $c$ such that~\eqref{eq:nrj} holds for all $j\geq 1$ and all $r\leq r_j$. Since $j=O(\log r)$, dividing by $r$ and taking limits completes the proof of the lemma.
	\end{proof}

	\begin{lem}\label{lem:y}
		For all $r\in\N$, $K_r(y)\leq K_r^A(y\mid x)+O(\log r)$.
	\end{lem}
	\begin{proof}
		First observe that for all $j\in\N$
		\begin{equation}\label{eq:abxnohelp}
			K(\dot{y}[nr_j:nr_{j+1}]\mid \dot{y}[0:nr_j])\leq K^{A,x}(\dot{y}[nr_j:nr_{j+1}]\mid \dot{y}[0:nr_j])+O(j)\,.
		\end{equation}
		This is true because $\dot{y}[nr_j:nr_{j+1}]\in\{u_j,v_j\}$,
		\[K(u_j)\leq \log |u_j|+2\log\log|u_j|+O(1)=O(j)\,,\]
		and
		\[K(v_j)\leq |v_j|+2\log|v_j|+O(1)=|v_j|+O(j)\,,\]
		which is at most $O(j)$ greater than $K^{A,x}(v_j\mid \dot{y}[0:nr_j])$ by inequality~\eqref{eq:vrandom}.
		
		Now let $r\in\N$, and let $j\in\N$ satisfy $r_j\leq r<r_{j+1}$. For exactly the same reason as inequality~\eqref{eq:abxnohelp},
		\begin{equation}\label{eq:abxnohelplast}
			K(\dot{y}[nr_j:nr]\mid \dot{y}[0:nr_j])\leq K^{A,x}(\dot{y}[nr_j:nr]\mid \dot{y}[0:nr_j])+O(j)\,.
		\end{equation}
		We now use inequalities~\eqref{eq:abxnohelp} and~\eqref{eq:abxnohelplast} together with equation~\eqref{eq:convert} and a chained application of the symmetry of information:
		\begin{align*}
		K_r(y)&=K(\dot{y}[0:nr])+O(j)\\
		&= K(\dot{y}[nr_j:nr]\mid \dot{y}[0:nr_j])+\sum_{i=0}^{j-1} \big(K(\dot{y}[nr_i:nr_{i+1}]\mid \dot{y}[0:nr_i])+O(i)\big)+O(j)\\
		&\leq K^{A,x}(\dot{y}[nr_j:nr]\mid \dot{y}[0:nr_j])+\sum_{i=0}^{j-1} \big(K^{A,x}(\dot{y}[nr_i:nr_{i+1}]\mid \dot{y}[0:nr_i])+O(i)\big)+O(j)\\
		&=K^{A,x}(\dot{y}[0:nr])+O(j^2)\\
		&=K^{A,x}_r(y)+O(\log r)\\
		&\leq K^A_r(y\mid x)+O(\log r)\,,
	\end{align*}
	where the last inequality is an application of Lemma~\ref{lem:relcond} relative to $A$.
	\end{proof}

	We will use two simple consequences of Lemma~\ref{lem:y} in terms of dimension, as given in the following corollaries.

	\begin{cor}~\label{cor:dimy}
		$\dim(y)\leq \dim^A(y)$.
	\end{cor}
	\begin{proof}
		Dividing the conclusion of Lemma~\ref{lem:y} by $r$ and taking limits inferior gives $\dim(y)\leq\dim^{A,x}(y)$. Since relativization cannot increase dimension, $\dim^{A,x}(y)\leq\dim^A(y)$. 
	\end{proof}
	\begin{cor}\label{cor:dimax}
		$\Dim^A(x)\leq\Dim^A(x\mid y)$
	\end{cor}
	\begin{proof}
		By symmetry of information,
		\[K_r^A(x)+K_r^A(y\mid x)\leq K^A(y)+K^A(x\mid y)+O(\log r)\,.\]
		Combining this with Lemma~\ref{lem:y} gives
		\[K_r^A(x)\leq K_r^A(x\mid y)+ O(\log r)\,.\]
		Dividing by $r$ and taking limits superior shows that $\Dim^A(x)\leq \Dim^A(x\mid y)$.
	\end{proof}

	We are now ready to complete the proof of Theorem~\ref{thm:packprod}. First,
	\begin{align*}
		\dim(y)&\leq\dim^{A}(y)\tag{by Corollary~\ref{cor:dimy}}\\
		&\leq\Dim^{A}(x,y)-\Dim^{A}(x\mid y)\tag{by Theorem~\ref{thm:chainrule} relative to $A$}\\
		&=\Dim^{A}(x,y)-\Dim^{A}(x)\tag{by Corollary~\ref{cor:dimax}}\\
		&\leq n+2\delta n-\Dim^{A}(x)\tag{by Lemma~\ref{lem:xy}}\\
		&< n+2\delta n-(\dimP(E)-\delta)\tag{by inequality~\eqref{eq:choiceofx}}\\
		&< n-\dimP(E)+\varepsilon\,,\tag{by inequality~\eqref{eq:choiceofdelta}}
	\end{align*}
	so $y\in F$. Hence,
	\begin{align*}
		\dimH(E\times F)&\geq \dim^{A}(x,y)\tag{by equation~\eqref{eq:choiceofA}}\\
		&\geq n-3\delta n\tag{by Lemma~\ref{lem:xy}}\,.
	\end{align*}
	Letting $\delta\to 0$, we see that inequality~\eqref{eq:goal} holds.
\end{proof}
		
	\section{Conclusion}
		The applications of theoretical computer science to pure mathematics in this paper yielded significant generalizations to basic theorems on Hausdorff and packing dimensions, as well as simpler arguments for other such theorems. Understanding classical fractal dimensions as pointwise, algorithmic information theoretic quantities enables reasoning about them in a way that is both fine-grained and intuitive, and the proofs in this work are further evidence of the power and versatility of bounding techniques using the point-to-set principle (Theorem~\ref{thm:p2s}). In particular, Theorems~\ref{thm:hausint} and~\ref{thm:packprod} demonstrate that this approach can be used to strengthen the foundations of fractal geometry. Therefore, in addition to further applications of these techniques, broadening and refining our understanding of the relationship between classical geometric measure theory and Kolmogorov complexity is an appealing direction for future investigations.
\bibliography{fipad}

\begin{thebibliography}{10}

\bibitem{AHLM07}
Krishna~B. Athreya, John~M. Hitchcock, Jack~H. Lutz, and Elvira Mayordomo.
\newblock {Effective strong dimension in algorithmic information and
  computational complexity}.
\newblock {\em SIAM Journal of Computing}, 37(3):671--705, 2007.
\newblock \href {http://dx.doi.org/10.1137/s0097539703446912}
  {\path{doi:10.1137/s0097539703446912}}.

\bibitem{BeReSl18}
Ver{\'o}nica Becher, Jan Reimann, and Theodore~A. Slaman.
\newblock Irrationality exponent, {H}ausdorff dimension and effectivization.
\newblock {\em Monatshefte f{\"u}r Mathematik}, 185(2):167--188, Feb 2018.
\newblock \href {http://dx.doi.org/10.1007/s00605-017-1094-2}
  {\path{doi:10.1007/s00605-017-1094-2}}.

\bibitem{Bishop17}
Christopher~J. Bishop.
\newblock Personal communication, April 27, 2017.

\bibitem{BisPer96}
Christopher~J. Bishop and Yuval Peres.
\newblock Packing dimension and {C}artesian products.
\newblock {\em Transactions of the American Mathematical Society},
  348:4433--4445, 1996.
\newblock \href {http://dx.doi.org/10.1090/S0002-9947-96-01750-3}
  {\path{doi:10.1090/S0002-9947-96-01750-3}}.

\bibitem{BisPer16}
Christopher~J. Bishop and Yuval Peres.
\newblock {\em Fractals in Probability and Analysis}.
\newblock Cambridge Studies in Advanced Mathematics. Cambridge University
  Press, 2016.
\newblock \href {http://dx.doi.org/10.1017/9781316460238}
  {\path{doi:10.1017/9781316460238}}.

\bibitem{CaiHar94}
Jin{-}Yi Cai and Juris Hartmanis.
\newblock On {H}ausdorff and topological dimensions of the {K}olmogorov
  complexity of the real line.
\newblock {\em Journal of Computer and System Sciences}, 49(3):605--619, 1994.
\newblock \href {http://dx.doi.org/10.1016/S0022-0000(05)80073-X}
  {\path{doi:10.1016/S0022-0000(05)80073-X}}.

\bibitem{CasLut15}
Adam Case and Jack~H. Lutz.
\newblock Mutual dimension.
\newblock {\em ACM Transactions on Computation Theory}, 7(3):12, 2015.
\newblock \href {http://dx.doi.org/10.1145/2786566}
  {\path{doi:10.1145/2786566}}.

\bibitem{Davi71}
R.~O. Davies.
\newblock Some remarks on the {K}akeya problem.
\newblock {\em Proceedings of the Cambridge Philosophical Society},
  69:417--421, 1971.
\newblock \href {http://dx.doi.org/10.1017/s0305004100046867}
  {\path{doi:10.1017/s0305004100046867}}.

\bibitem{Falc85}
Kenneth~J. Falconer.
\newblock {\em The Geometry of Fractal Sets}.
\newblock Cambridge University Press, 1985.
\newblock \href {http://dx.doi.org/10.1017/cbo9780511623738}
  {\path{doi:10.1017/cbo9780511623738}}.

\bibitem{Falc94}
Kenneth~J. Falconer.
\newblock Sets with large intersection properties.
\newblock {\em Journal of the London Mathematical Society}, 49(2):267--280,
  1994.
\newblock \href {http://dx.doi.org/10.1112/jlms/49.2.267}
  {\path{doi:10.1112/jlms/49.2.267}}.

\bibitem{Falc14}
Kenneth~J. Falconer.
\newblock {\em Fractal Geometry: Mathematical Foundations and Applications}.
\newblock Wiley, third edition, 2014.
\newblock \href {http://dx.doi.org/10.1002/0470013850}
  {\path{doi:10.1002/0470013850}}.

\bibitem{Haus19}
Felix Hausdorff.
\newblock Dimension und \"ausseres {M}ass.
\newblock {\em Mathematische Annalen}, 79:157--179, 1919.
\newblock \href {http://dx.doi.org/10.1007/978-3-642-59483-0_2}
  {\path{doi:10.1007/978-3-642-59483-0_2}}.

\bibitem{Kaha86}
Jean-Pierre Kahane.
\newblock Sur la dimension des intersections.
\newblock In Jorge~Alberto Barroso, editor, {\em Aspects of mathematics and its
  applications}, North-Holland Mathematical Library, 34, pages 419--430.
  Elsevier, 1986.
\newblock \href {http://dx.doi.org/10.1016/s0924-6509(09)70272-7}
  {\path{doi:10.1016/s0924-6509(09)70272-7}}.

\bibitem{LiVit08}
Ming Li and Paul~M.B. Vit\'{a}nyi.
\newblock {\em An Introduction to {K}olmogorov Complexity and Its
  Applications}.
\newblock Springer, third edition, 2008.
\newblock \href {http://dx.doi.org/10.1007/978-0-387-49820-1}
  {\path{doi:10.1007/978-0-387-49820-1}}.

\bibitem{Lutz03b}
Jack~H. Lutz.
\newblock The dimensions of individual strings and sequences.
\newblock {\em Information and Computation}, 187(1):49--79, 2003.
\newblock \href {http://dx.doi.org/10.1016/s0890-5401(03)00187-1}
  {\path{doi:10.1016/s0890-5401(03)00187-1}}.

\bibitem{LutLut18}
Jack~H. Lutz and Neil Lutz.
\newblock Algorithmic information, plane {K}akeya sets, and conditional
  dimension.
\newblock {\em ACM Transactions on Computation Theory}, 10(2):7:1--7:22, 2018.
\newblock \href {http://dx.doi.org/10.1145/3201783}
  {\path{doi:10.1145/3201783}}.

\bibitem{LutMay08}
Jack~H. Lutz and Elvira Mayordomo.
\newblock Dimensions of points in self-similar fractals.
\newblock {\em SIAM Journal of Computing}, 38(3):1080--1112, 2008.
\newblock \href {http://dx.doi.org/10.1007/978-3-540-69733-6_22}
  {\path{doi:10.1007/978-3-540-69733-6_22}}.

\bibitem{LutStu20}
Neil Lutz and D.~M. Stull.
\newblock Bounding the dimension of points on a line.
\newblock {\em Information and Computation}, 275, 2020.
\newblock \href {http://dx.doi.org/10.1016/j.ic.2020.104601}
  {\path{doi:10.1016/j.ic.2020.104601}}.

\bibitem{Mars54}
John~M. Marstrand.
\newblock Some fundamental geometrical properties of plane sets of fractional
  dimensions.
\newblock {\em Proceedings of the London Mathematical Society}, 4(3):257--302,
  1954.
\newblock \href {http://dx.doi.org/10.1112/plms/s3-4.1.257}
  {\path{doi:10.1112/plms/s3-4.1.257}}.

\bibitem{Mart66}
Per Martin-L\"of.
\newblock The definition of random sequences.
\newblock {\em Information and Control}, 9(6):602--619, 1966.
\newblock \href {http://dx.doi.org/10.1016/s0019-9958(66)80018-9}
  {\path{doi:10.1016/s0019-9958(66)80018-9}}.

\bibitem{Matt84}
Pertti Mattila.
\newblock Hausdorff dimension and capacities of intersections of sets in
  $n$-space.
\newblock {\em Acta Mathematica}, 152:77--105, 1984.
\newblock \href {http://dx.doi.org/10.1007/bf02392192}
  {\path{doi:10.1007/bf02392192}}.

\bibitem{Matt85}
Pertti Mattila.
\newblock On the {H}ausdorff dimension and capacities of intersections.
\newblock {\em Mathematika}, 32:213--217, 1985.
\newblock \href {http://dx.doi.org/10.1112/s0025579300011001}
  {\path{doi:10.1112/s0025579300011001}}.

\bibitem{Matt95}
Pertti Mattila.
\newblock {\em Geometry of sets and measures in {E}uclidean spaces: fractals
  and rectifiability}.
\newblock Cambridge University Press, 1995.
\newblock \href {http://dx.doi.org/10.1017/cbo9780511623813}
  {\path{doi:10.1017/cbo9780511623813}}.

\bibitem{Mayo02}
Elvira Mayordomo.
\newblock A {K}olmogorov complexity characterization of constructive
  {H}ausdorff dimension.
\newblock {\em Information Processing Letters}, 84(1):1--3, 2002.
\newblock \href {http://dx.doi.org/10.1016/s0020-0190(02)00343-5}
  {\path{doi:10.1016/s0020-0190(02)00343-5}}.

\bibitem{Mayo08}
Elvira Mayordomo.
\newblock Effective fractal dimension in algorithmic information theory.
\newblock In S.~Barry Cooper, Benedikt L{\"o}we, and Andrea Sorbi, editors,
  {\em New Computational Paradigms: Changing Conceptions of What is
  Computable}, pages 259--285. Springer New York, 2008.
\newblock \href {http://dx.doi.org/10.1007/978-0-387-68546-5_12}
  {\path{doi:10.1007/978-0-387-68546-5_12}}.

\bibitem{MolRel12}
Ursula Molter and Ezequiel Rela.
\newblock Furstenberg sets for a fractal set of directions.
\newblock {\em Proceedings of the American Mathematical Society},
  140:2753--2765, 2012.
\newblock \href {http://dx.doi.org/10.1090/s0002-9939-2011-11111-0}
  {\path{doi:10.1090/s0002-9939-2011-11111-0}}.

\bibitem{Reim04}
Jan Reimann.
\newblock {\em Computability and fractal dimension}.
\newblock PhD thesis, Heidelberg University, 2004.
\newblock \href {http://dx.doi.org/10.11588/heidok.00005543}
  {\path{doi:10.11588/heidok.00005543}}.

\bibitem{Reim08}
Jan Reimann.
\newblock Effectively closed sets of measures and randomness.
\newblock {\em Annals of Pure and Applied Logic}, 156(1):170--182, 2008.
\newblock \href {http://dx.doi.org/10.1016/j.apal.2008.06.015}
  {\path{doi:10.1016/j.apal.2008.06.015}}.

\bibitem{Ryab84}
Boris Ryabko.
\newblock Coding of combinatorial sources and {H}ausdorff dimension.
\newblock {\em Soviets Mathematics Doklady}, 30:219--222, 1984.

\bibitem{Ryabko86}
Boris Ryabko.
\newblock Noiseless coding of combinatorial sources.
\newblock {\em Problems of Information Transmission}, 22:170--179, 1986.

\bibitem{Ryabko93}
Boris Ryabko.
\newblock Algorithmic approach to the prediction problem.
\newblock {\em Problems of Information Transmission}, 29:186--193, 1993.

\bibitem{Ryabko94}
Boris Ryabko.
\newblock The complexity and effectiveness of prediction algorithms.
\newblock {\em Journal of Complexity}, 10(3):281--295, 1994.
\newblock \href {http://dx.doi.org/10.1006/jcom.1994.1015}
  {\path{doi:10.1006/jcom.1994.1015}}.

\bibitem{Staiger89}
Ludwig Staiger.
\newblock Kolmogorov complexity and {H}ausdorff dimension.
\newblock {\em Information and Computation}, 103:159--194, 1989.
\newblock \href {http://dx.doi.org/10.1007/3-540-51498-8_42}
  {\path{doi:10.1007/3-540-51498-8_42}}.

\bibitem{Staiger98}
Ludwig Staiger.
\newblock A tight upper bound on {K}olmogorov complexity and uniformly optimal
  prediction.
\newblock {\em Theory of Computing Systems}, 31:215--229, 1998.
\newblock \href {http://dx.doi.org/10.1007/s002240000086}
  {\path{doi:10.1007/s002240000086}}.

\bibitem{SteSha05}
Elias~M. Stein and Rami Shakarchi.
\newblock {\em Real Analysis: Measure Theory, Integration, and Hilbert Spaces}.
\newblock Princeton Lectures in Analysis. Princeton University Press, 2005.

\bibitem{Sull84}
Dennis Sullivan.
\newblock Entropy, {H}ausdorff measures old and new, and limit sets of
  geometrically finite {K}leinian groups.
\newblock {\em Acta Mathematica}, 153(1):259--277, 1984.
\newblock \href {http://dx.doi.org/10.1007/bf02392379}
  {\path{doi:10.1007/bf02392379}}.

\bibitem{Tric82}
Claude Tricot.
\newblock Two definitions of fractional dimension.
\newblock {\em Mathematical Proceedings of the Cambridge Philosophical
  Society}, 91(1):57--74, 1982.
\newblock \href {http://dx.doi.org/10.1017/s0305004100059119}
  {\path{doi:10.1017/s0305004100059119}}.

\bibitem{ZvoLev70}
A.~K. Zvonkin and L.~A. Levin.
\newblock The complexity of finite objects and the development of the concepts
  of information and randomness by means of the theory of algorithms.
\newblock {\em Russian Mathematical Surveys}, 25:83--124, 1970.
\newblock \href {http://dx.doi.org/10.1070/RM1970v025n06ABEH001269}
  {\path{doi:10.1070/RM1970v025n06ABEH001269}}.

\end{thebibliography}

\end{document}